\newcommand{\R}{\mathbb{R}}
\newcommand{\prad}{\widehat{p}_\mathrm{rad}}
\newcommand{\ptan}{\widehat{p}_\mathrm{tan}}
\newtheorem{theorem}{Theorem}
\newtheorem{lemma}{Lemma}
\newtheorem{proposition}{Proposition}
\newtheorem{definition}{Definition}
\numberwithin{equation}{section}
\title{Self-gravitating static balls \\ of power-law elastic matter}
 \author
 {Artur~Alho  \\
           {\small Center for Mathematical Analysis, Geometry and Dynamical Systems}  \\
      {\small Instituto Superior T\'ecnico, Universidade de Lisboa}  \\[0.4cm]
       Simone Calogero, Astrid Liljenberg  \\
       {\small Department of Mathematical Sciences}  \\
       {\small Chalmers University of Technology, University of Gothenburg}  \\
       {\small Gothenburg, Sweden} 
       }
\date{}
\begin{document}

\maketitle
\begin{abstract}
We study a class of power-law stored energy functions for spherically symmetric elastic bodies that includes well-known material models, such as the Saint Venant-Kirchhoff, Hadamard, Signorini and John models. We identify a finite subclass of these stored energy functions, which we call Lam\'e type, that depend on no more material parameters than the bulk modulus $\kappa>0$ and the Poisson ratio $-1<\nu\leq1/2$. A general theorem proving the existence of static self-gravitating elastic balls for some power-law materials has been given elsewhere. In this paper numerical evidence is provided that some hypotheses in this theorem are necessary, while others are not. 
\end{abstract}

\section{Introduction}
Spherically symmetric static configurations of self-gravitating matter distributions are described by the equation
\begin{equation}\label{generalstatic}
p_\mathrm{rad}'=\frac{2}{r}(p_\mathrm{tan}-p_\mathrm{rad})-G\rho\frac{m}{r^2},
\end{equation}
in which $p_\mathrm{rad}(r)$ is the radial pressure, $p_\mathrm{tan}(r)$ the tangential pressure and $\rho(r)$ the mass density of the matter distribution, while 
\[
m(r)=4\pi\int_0^r\rho(s)\,s^2\,ds
\]
is the mass enclosed in the ball of radius $r>0$; $G$ is Newton's gravitational constant. 
Solutions of~\eqref{generalstatic}  are of paramount importance in astrophysics, where, depending on the matter model being used, describe stars, galaxies, planets, or other systems, in static equilibrium~\cite{BT,KW,MW16}. The matter model is specified by assigning an equation of state between the Euler state variables $(\rho,p_\mathrm{rad},p_\mathrm{tan})$, e.g., $p_\mathrm{rad}=p_\mathrm{tan}=F(\rho)$ for a barotropic fluid; in the case of kinetic matter models, the Euler state variables are given by integral moments of the one-particle distribution function $f$ in phase-space~\cite{BT}.

In this paper we study Equation~\eqref{generalstatic} for single balls of elastic matter with stored energy function $\widehat{w}:(0,\infty)^2\to\R$. We use the formulation of elasticity theory for spherically symmetric bodies with natural reference state introduced in~\cite{AC18}, see also~\cite{SC}, in which the Euler state variables of elastic balls satisfy the equations of state
\[
\rho(r)=\mathcal{K}\delta(r),\quad p_\mathrm{rad}(r)=\prad(\delta(r),\eta(r)),\quad p_\mathrm{tan}(r)=\ptan(\delta(r),\eta(r)),
\]
where the constitutive function $(\prad,\ptan)$ is given by
\begin{equation}\label{constfun}
\prad(\delta,\eta)=\delta^2\partial_\delta \widehat{w} (\delta,\eta),\quad \ptan(\delta,\eta)=\prad(\delta,\eta)+\frac{3}{2}\delta\eta\partial_\eta\widehat{w}(\delta,\eta).
\end{equation}
Here $\mathcal{K}$ is called the reference density of the spherically symmetric elastic body and
\[
\eta(r)=\frac{m(r)}{\frac{4\pi}{3}\mathcal{K}r^3}.
\]
In terms of the variables $\delta,\eta$, Equation~\eqref{generalstatic} reads
\begin{subequations}\label{staticsystem}
\begin{equation}
\widehat{a}(\delta,\eta)\delta'=\frac{\widehat{b}(\delta,\eta)}{r}(\eta-\delta)- \frac{4\pi G}{3}\mathcal{K}^2r\,\eta\,\delta,\quad \eta(r)=\frac{3}{r^3}\int_0^r\delta(s)s^2\,ds,
\end{equation}
where
\begin{equation}\label{abc}
\widehat{a}(\delta,\eta)=\partial_\delta\prad(\delta,\eta),\quad \widehat{b}(\delta,\eta)=2\frac{\widehat{p}_\mathrm{tan}(\delta,\eta)-\widehat{p}_\mathrm{rad}(\delta,\eta)}{\eta-\delta}+3\partial_\eta\widehat{p}_\mathrm{rad}(\delta,\eta).
\end{equation}
\end{subequations}
Note that $\eta(0)=\delta(0)$ holds for regular solutions of~\eqref{staticsystem} and it can be shown that $\eta(r)>\delta(r)$ holds for $r>0$, see~\cite{SC}. 
The stored energy function $\widehat{w}$ will be assumed to be of the power-law type introduced in~\cite{AC19}. Several important examples of elastic material models belong to this class, e.g., the Saint-Venant Kirchhoff model, the Signorini model, the Hadamard model and the John model, see Section~\ref{exampelsSec}. (The Ogden material model~\cite{Ogden} also belongs to this class, but it is not included in this paper.)  
Elastic matter models have long been used in astrophysics~\cite{Jeans,Lord,Love, Love2} and have important applications to e.g.~describe the deformation of planets~\cite{JKJG,MW16} and neutron stars crusts~\cite{CH}. 
The existence of static Newtonian self-gravitating elastic bodies, without any symmetry assumption, has been studied in~\cite{BS1,CT} using the Lagrangian formulation of elasticity theory. 
The first theorem proving the existence of static self-gravitating multi-body  elastic matter distributions with regular boundaries and arbitrarily large strain has been given in~\cite{AC18} for the Seth model in spherical symmetry and it was later extended to more general elastic models for static self-gravitating balls in~\cite{AC19}. One purpose of this paper is to present numerical evidence showing that some of the assumptions made in~\cite{AC19} are necessary, while other are not, see Section~\ref{numsec}. In the next section we define and discuss some general properties of power-law stored energy functions.
 
\section{Power-law hyperelastic constitutive functions}
Let $\kappa>0$ be the bulk modulus and $-1<\nu\leq 1/2$ be the Poisson ratio of the material. Several stored energy functions found in literature have the form presented in the following definition.
   \begin{definition}\label{powerdef}
   Let $(n_1,n_2,\dots, n_m)\in\mathbb{N}^m$, $m\geq 2$, and $\theta_j,\beta_{ij}\in\R, i=1,\dots,n_j,\ j=1,\dots,m$ be such that  
\begin{itemize}
\item[(i)] $\theta_1<\theta_2<\dots< \theta_m$, $\beta_{1j}<\beta_{2j}<\dots <\beta_{n_jj}$, for all $j=1,\dots,m$; 
\item[(ii)] if $n_j=1$, then $\theta_j\neq 0$ and $\theta_j=\beta_{1j}$;
\item[(iii)] at least one of the numbers $\beta_{ij}$ is different from 0 and $-1$;
\item[(iv)] at least one of the positive integers $n_j$ is greater than 1.
\end{itemize}
Assume that there exists an interval $V\subseteq(-1,1/2 ]$ such that for all $\nu\in V$ the following linear system
\begin{subequations}\label{condsyst}
\begin{align}
&\sum_{j=1}^{m}\sum_{i=1}^{n_j}\alpha_{ij}\theta_j=0,\quad\sum_{j=1}^{m}\sum_{i=1}^{n_j}\alpha_{ij}\theta_j^2=1,\quad  \sum_{j=1}^m\sum_{i=1}^{n_j}\alpha_{ij}\beta_{ij}^2=3\frac{1-\nu}{1+\nu},\label{cond2} \\ 
&\sum_{i=1}^{n_j}\alpha_{ij}(\theta_j-\beta_{ij})=0,\quad j=1,\dots, m\label{cond3}
\end{align}
\end{subequations}
has a solution $\alpha_{ij}\in\R\diagdown\{0\}$, $ i=1,\dots,n_j,\ j=1,\dots,m$.
Then the function  $\widehat{w}:(0,\infty)^2\to\R$ given by
\begin{equation}\label{stored}
\kappa^{-1}\widehat{w}(\delta,\eta)=\sum_{j=1}^m \eta^{\theta_j}\sum_{i=1}^{n_j}\alpha_{ij}\left(\frac{\delta}{\eta}\right)^{\beta_{ij}} +w_0,\quad w_0:=-\sum_{j=1}^m\sum_{i=1}^{n_j}\alpha_{ij}
\end{equation}
 is said to be a type $(n_1,\dots,n_m)$ power-law elastic stored energy function for spherically symmetric bodies with natural reference state.
 \end{definition}
 {\it Remark.} Examples of well-known hyperelastic models with power-law stored energy function are the Saint Venant-Kirchhoff model, the Signorini model, the John model and the Hadamard model, see~\cite{AC19} and Section~\ref{exampelsSec} below. 

The constitutive funtion~\eqref{constfun} for power-law stored energy functions is given by
\begin{align}
&\kappa^{-1}\prad(\delta,\eta)=\sum_{j=1}^{m} \eta^{1+\theta_j} \sum_{i=1}^{n_j}\alpha_{ij}\beta_{ij}\left(\frac{\delta}{\eta}\right)^{1+\beta_{ij}},\label{Prad}\\
& \kappa^{-1}\ptan(\delta,\eta)=\frac{1}{2}\sum_{j=1}^m \eta^{1+\theta_j} \sum_{i=1}^{n_j}\alpha_{ij}(3\theta_j- \beta_{ij})\left(\frac{\delta}{\eta}\right)^{1+\beta_{ij}},\label{Ptan}
\end{align}
while the functions $\widehat{a}(\delta,\eta)$, $\widehat{b}(\delta,\eta)$ in~\eqref{abc} are given by
\begin{subequations}\label{abcpower}
\begin{align}
&\kappa^{-1}\widehat{a}(\delta,\eta)=\sum_{j\in J}\eta^{\theta_j}\sum_{i\in I_j}\alpha_{ij}\beta_{ij}(1+\beta_{ij})\left(\frac\delta\eta\right)^{\beta_{ij}},\\
&\kappa^{-1}\widehat{b}(\delta,\eta)=3\sum_{j\in J}\eta^{\theta_j}\sum_{i\in I_j}\alpha_{ij}(\theta_j-\beta_{ij})\left[\frac{(\delta/\eta)^{1+\beta_{ij}}-1}{1-(\delta/\eta)}+\beta_{ij}\left(\frac{\delta}{\eta}\right)^{1+\beta_{ij}}\right],  
\end{align}
\end{subequations}
where we introduced the sets 
\[
I_j:=\{i\in\{1,\dots,n_j\}:\beta_{ij}\neq -1, \beta_{ij}\neq 0\},\quad J=\{j\in\{1,\dots,m\}: I_j\neq\varnothing\}.
\]
By condition (iii) on $\beta_{ij}$ in Definition~\ref{powerdef}, $I_j$ is not empty for at least one $j\in\{1,\dots,m\}$.

For a stored energy function of the form~\eqref{stored}, the definition of $w_0$ is equivalent to the normalization condition $\widehat{w}(1,1)=0$.
Equations~\eqref{cond2} are equivalent to the following compatibility equations with linear elasticity:
\begin{subequations}\label{lincomp}
\begin{equation}\label{lincomp1}
\kappa^{-1}\partial_\delta\prad(1,1)=\frac{3(1-\nu)}{1+\nu},\quad \kappa^{-1}\partial_\eta\prad(1,1)=-\frac{2(1-2\nu)}{1+\nu},
\end{equation}
\begin{equation}\label{lincomp2}
\kappa^{-1}\partial_\delta\ptan(1,1)=\frac{3\nu}{1+\nu},\quad\kappa^{-1}\partial_\eta\ptan(1,1)=\frac{1-2\nu}{1+\nu},
\end{equation}
\end{subequations}
and the natural reference state condition
\begin{equation}\label{naturalstate}
\prad(1,1)=\ptan(1,1)=0,
\end{equation}
while~\eqref{cond3} is equivalent to isotropic condition
\begin{equation}\label{regularcentercond}
\widehat{p}_\mathrm{rad}(\delta,\delta)=\widehat{p}_\mathrm{tan}(\delta,\delta).
\end{equation}
We refer to~\cite{AC18, SC} for a more detailed discussion on the conditions~\eqref{lincomp}-\eqref{regularcentercond}. 

{\it Remark.} Condition (i) only affects the order in which the factors $\eta^{1+\theta_j}(\delta/\eta)^{1+\beta_{ij}}$ appear in the stored energy function (lexicographic order); condition (ii) is required for consistency with~\eqref{cond3}, while condition (iii) is imposed to ensure that the constitutive function for the radial pressure is not independent of $\delta$, see~\eqref{Prad}. Condition (iv) excludes the power-law types $(1,1,1,\dots, 1)$ from Definition~\ref{powerdef}. These stored energy functions correspond to barotropic fluid models and will be discussed in a separate section, see the last example in Section~\ref{exampelsSec}.


{\it Remark. }
In~\cite{AC19} we used the Lam\'e material parameters $\lambda,\mu$ instead of $\kappa,\nu$. The relation between these two sets of parameters is 
\[
\kappa=\lambda+\frac{2\mu}{3},\quad\nu=\frac{\lambda}{2(\lambda+\mu)}.
\]

{\it Remark.} 
The assumption that the system~\eqref{condsyst} must have non-zero solutions for all $\nu$ in some interval $V\subseteq (-1,1/2 ]$ is required to exclude the possibility that the stored energy function~\eqref{stored} is defined only for isolated values of the Poisson ratio. The latter seems rather artificial and would result in unpleasant technical complications.

{\it Remark.} For some applications, e.g., to study the homologous motion of (self-gravitating) elastic balls~\cite{SC}, it is necessary to consider elastic stored energy functions which do not satisfy the natural state condition~\eqref{naturalstate}.

If there are no values of $\theta_j$, $\beta_{ij}$  for which the system~\eqref{condsyst} admits solutions $\alpha_{ij}\neq 0$ for all $\kappa>0, \nu\in V$, for some interval $V\subseteq(-1,1/2]$, then the corresponding power-law stored energy function is inadmissible. To this regard we have the following simple lemma.
\begin{lemma}\label{inadstored}
There are no power-law stored energy functions of type (1,2), (2,1) or (2,2).
\end{lemma}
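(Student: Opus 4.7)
My strategy is to show that in each of the three excluded types the subsystem of~\eqref{condsyst} that does not involve the Poisson ratio $\nu$ --- namely the first two equations of~\eqref{cond2} together with all of~\eqref{cond3} --- already determines every coefficient $\alpha_{ij}$ uniquely in terms of the exponents $\theta_j,\beta_{ij}$. The remaining (third) equation of~\eqref{cond2} then takes the form $L=3(1-\nu)/(1+\nu)$ for a fixed number $L\in\R$ depending only on those exponents; since the right-hand side is strictly decreasing in $\nu$ on $(-1,1/2]$, at most one value of $\nu$ can satisfy it, contradicting the existence of an interval $V$ of admissible Poisson ratios.

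The argument proceeds in three steps uniform across the three types. Step~1: condition~(ii) gives $\theta_j=\beta_{1j}$ whenever $n_j=1$, so the corresponding equation in~\eqref{cond3} reduces to the trivial identity $0=0$; in types $(1,2)$ and $(2,1)$ only one nontrivial equation from~\eqref{cond3} therefore survives, and in type $(2,2)$ both do. Step~2: setting $S_j:=\sum_{i=1}^{n_j}\alpha_{ij}$, the first two equations of~\eqref{cond2} form a $2\times 2$ system in $(S_1,S_2)$ with determinant $\theta_1\theta_2(\theta_2-\theta_1)$. This is nonzero, because $\theta_1<\theta_2$ by~(i), and because $\theta_j=0$ for either index would force the other $S_k$ to vanish by the first equation, contradicting the inhomogeneous second equation $\theta_k^2 S_k=1$; hence $S_1$ and $S_2$ are uniquely determined. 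Step~3: for each $j$ with $n_j=2$, the surviving equation from~\eqref{cond3} together with $\alpha_{1j}+\alpha_{2j}=S_j$ is a $2\times 2$ linear system in $(\alpha_{1j},\alpha_{2j})$ with determinant $\beta_{2j}-\beta_{1j}\neq 0$ by~(i), so both coefficients are uniquely fixed; for $n_j=1$ one of course has $\alpha_{1j}=S_j$.

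All the $\alpha_{ij}$ are thus expressed in closed form in terms of the $\theta_j$ and $\beta_{ij}$ alone, so the third equation of~\eqref{cond2} becomes a single scalar equation in $\nu$ with at most one solution in $(-1,1/2]$, proving the lemma. A counting check corroborates the picture: the $\nu$-independent subsystem carries exactly three equations in three unknowns for types $(1,2)$ and $(2,1)$, and four equations in four unknowns for type $(2,2)$, and the nonvanishing of the determinants above certifies invertibility in each case. The only mildly technical point is this verification of nonvanishing, which follows directly from the strict inequalities in~(i) combined with the elementary observation in Step~2 that neither $\theta_j$ can be zero.
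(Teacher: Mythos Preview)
Your proof is correct and follows essentially the same approach as the paper: show that the $\nu$-independent subsystem (the first two equations of~\eqref{cond2} together with~\eqref{cond3}) pins down all the $\alpha_{ij}$, so that the remaining equation fixes $\nu$ to a single value rather than an interval. The only difference is presentational: the paper carries out the computation explicitly for type $(1,2)$ and then says ``similarly'' for the other two cases, whereas your use of the aggregates $S_j=\sum_i\alpha_{ij}$ lets you handle all three types uniformly without writing down explicit formulas --- a cleaner packaging of the same idea (and indeed the paper's remark following the proof articulates exactly your strategy).
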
 
\begin{proof}
When $m=2$ and $(n_1,n_2)=(1,2)$, the system~\eqref{condsyst} is given by
\[
\mathrm{(*)}\left\{
\begin{array}{l}
\alpha_{11}\theta_1+\alpha_{12}\theta_2+\alpha_{22}\theta_2=0,\\
\alpha_{11}\theta_1^2+\alpha_{12}\theta_2^2+\alpha_{22}\theta_2^2=1,\\
\alpha_{12}(\theta_2-\beta_{12})+\alpha_{22}(\theta_2-\beta_{22})=0,\\
\alpha_{11}\theta_{1}^2+\alpha_{12}\beta_{12}^2+\alpha_{22}\beta_{22}^2=3\frac{1-\nu}{1+\nu}.
\end{array}
\right.
\]
The system consisting of the first three equations has no solution when $\theta_1$ or $\theta_2$ is zero, while for $(\theta_1,\theta_2)\neq(0,0)$ has the unique solution
\[
\alpha_{11}=\frac{1}{\theta_1(\theta_1-\theta_2)},\quad \alpha_{12}=\frac{\theta_2-\beta_{22}}{\theta_2(\theta_1-\theta_2)(\beta_{22}-\beta_{12})},\quad \alpha_{22}=\frac{\beta_{12}-\theta_2}{\theta_2(\theta_1-\theta_2)(\beta_{22}-\beta_{12})},
\]
which is non-zero provided $\theta_2\neq \beta_{i2}$, $i=1,2$. Replacing in the fourth equation we obtain that the system $\mathrm{(*)}$ has no solutions, except possibly for an isolated value of the Poisson ratio and thus this model is not a type (1,2) power-law stored energy function. Similarly one proves that the types (2,1) and (2,2) are inadmissible.
\end{proof}

%

{\it Remark.} From the simple proof of the previous lemma it is clear that a power-law type is inadmissible if the subsystem of~\eqref{condsyst} consisting of~\eqref{cond3} and the first two equations in~\eqref{cond2} has a unique solution. The special result included in Lemma~\ref{inadstored} will be used in Lemma~\ref{lametheo}.

\subsection{Lam\'e type power-law stored energy functions}
We shall say that a power-law stored energy function is of Lam\'e type if the coefficients $\alpha_{ij}$ in~\eqref{stored} are uniquely determined by the exponents $\theta_j, \beta_{ij}$ and the Poisson ratio $\nu$ through the system~\eqref{condsyst}. Power-law stored energies which are not Lam\'e type contain additional parameters besides the bulk modulus $\kappa$ and the Poisson ratio $\nu$. We remark that these additional parameters are not genuine material constants, as they depend on having assumed a specific type of stored energy function (while $\kappa$ and $\nu$ only depend on the postulate that {\it all} materials obey linear elasticity for very small strain). 
As shown in the following lemma, there are only a few Lam\'e types power-law materials.
\begin{lemma}\label{lametheo}
The only possible elastic stored energy functions of Lam\'e type are the following:
\begin{subequations}\label{lametypes}
\begin{align}
&m=2:  (1,3),  (2,3)\quad\text{and permutations},\\
&m=3:  (1,1,2), (1,2,2), (2,2,2) \quad\text{and permutations.}
\end{align}
\end{subequations}
\end{lemma}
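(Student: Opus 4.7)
The strategy is to rephrase the Lamé-type condition as an equation count for the linear system~\eqref{condsyst} in the unknowns $\alpha_{ij}$, and then to enumerate the types for which that count is tight.

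First I would count equations and unknowns. The block~\eqref{cond2} contributes three equations (the third alone depends on $\nu$) while the block~\eqref{cond3} contributes $m$ equations; however, whenever $n_j=1$ condition (ii) of Definition~\ref{powerdef} forces $\theta_j=\beta_{1j}$, so the $j$-th equation of~\eqref{cond3} reduces to $0=0$. Writing $m_1=\#\{j:n_j=1\}$ and $N=\sum_{j=1}^m n_j$, the system amounts to $3+m-m_1$ non-trivial linear equations in $N$ unknowns.

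Next I would argue that a power-law stored energy function is of Lamé type if and only if the system is square, $N=3+m-m_1$. If $N>3+m-m_1$ the system is underdetermined, so at least one coefficient $\alpha_{ij}$ is a free parameter and the stored energy function carries a constant beyond $\kappa$ and $\nu$. If $N<3+m-m_1$, the $\nu$-independent subsystem already determines the $\alpha_{ij}$ (this is exactly the mechanism exploited in the proof of Lemma~\ref{inadstored}), so the $\nu$-dependent equation of~\eqref{cond2} becomes an extra constraint that cannot hold for $\nu$ in any interval, and admissibility fails.

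I would then enumerate. Setting $k_j=n_j-1\geq 1$ for the indices with $n_j\geq 2$ and $m_{\geq 2}=m-m_1\geq 1$ (from condition (iv)), the squareness condition becomes
\[
m_1+\sum_{j\colon n_j\geq 2}k_j=3,\qquad m=m_1+m_{\geq 2}\geq 2.
\]
A short case split on $m_1\in\{0,1,2\}$ (noting that $m_1=0$, $m_{\geq 2}=1$ forces $m=1$, and $m_1\geq 3$ forces $m_{\geq 2}=0$) gives exactly the types $(2,3)$ and $(2,2,2)$ when $m_1=0$, the types $(1,3)$ and $(1,2,2)$ when $m_1=1$, and $(1,1,2)$ when $m_1=2$, together with their permutations, matching~\eqref{lametypes}.

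The main obstacle is the overdetermined direction of the second step, where one must be sure that for $N<3+m-m_1$ the $\nu$-independent subsystem actually has full column rank, so that the $\nu$-equation really is a redundant constraint. A direct inequality gives $N\geq m_1+2m_{\geq 2}=2m-m_1\geq 3+m-m_1$ as soon as $m\geq 3$, so overdetermined types can occur only when $m=2$; for $m=2$ they are exhausted by $(1,2)$, $(2,1)$, $(2,2)$, all three already ruled out by Lemma~\ref{inadstored}. This is precisely the use of Lemma~\ref{inadstored} foreseen in the preceding remark, and it closes the argument.
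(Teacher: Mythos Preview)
Your proposal is correct and follows essentially the same approach as the paper: both arguments count the non-trivial equations in~\eqref{condsyst} as $3+m-m_1$ against $N=\sum n_j$ unknowns, deduce $m\leq 3$ from $N\geq 2m-m_1$, enumerate the resulting types, and invoke Lemma~\ref{inadstored} to discard $(1,2)$, $(2,1)$, $(2,2)$. The only cosmetic difference is that you split off the overdetermined case and argue it coincides exactly with the three types handled by Lemma~\ref{inadstored}, whereas the paper simply carries those types through the enumeration via the inequality $3+m-k\geq N$ and removes them at the end; your ``if and only if'' phrasing for squareness is a slight overclaim (the ``if'' direction needs the coefficient matrix to be non-singular, which the paper addresses only after the lemma), but you use only the ``only if'' direction, so the argument stands.
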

\begin{proof}
Let $k$ be the number of $n_j= 1$. As the system~\eqref{condsyst} consists of $m+3-k$ equations and there are $\sum_{j=1}^m n_j$ coefficients $\alpha_{ij}$, then a necessary condition for a power-law material to be of Lam\'e type is that $m+3-k\geq\sum_{j=1}^m n_j$. Using 
\[
\sum_{j=1}^m n_j=k+\sum_{n_j\geq 2}n_j\geq k+2(m-k)
\]
we find that $m\leq 3$. Hence the only possible Lam\'e types are (1,2), (1,3), (2,2) and (2,3) for $m=2$,  (1,1,2), (1,2,2), (2,2,2) for $m=3$, and permutations thereof. Having shown in Lemma~\ref{inadstored} that the types (1,2), (2,1) and (2,2) are inadmissible, the proof is completed.
\end{proof} 

Upon studying each of the types~\eqref{lametypes} separately, one can easily show that they are all Lam\'e types except for some some special values of the exponents $\theta_1,\dots,\theta_m$. For instance, the most general type $(2,2,2)$ power-law stored energy function is
\begin{align}
&\kappa^{-1}\widehat{w}(\delta,\eta)=\eta^{\theta_1}\left(\alpha_{11}\left(\frac{\delta}{\eta}\right)^{\beta_{11}}+\alpha_{21}\left(\frac{\delta}{\eta}\right)^{\beta_{21}}\right)+\eta^{\theta_2}\left(\alpha_{12}\left(\frac{\delta}{\eta}\right)^{\beta_{12}}+\alpha_{22}\left(\frac{\delta}{\eta}\right)^{\beta_{22}}\right)\nonumber\\
&\quad+\eta^{\theta_3}\left(\alpha_{13}\left(\frac{\delta}{\eta}\right)^{\beta_{13}}+\alpha_{23}\left(\frac{\delta}{\eta}\right)^{\beta_{23}}\right)-(\alpha_{11}+\alpha_{21}+\alpha_{12}+\alpha_{22}+\alpha_{13}+\alpha_{23}).\label{general12}
\end{align}
The system~\eqref{condsyst} consists of 6 equations on $\alpha_{11},\alpha_{21},\alpha_{12},\alpha_{22},\alpha_{13},\alpha_{23}$. 
Defining
\begin{align*}
A_{(2,2,2)}&=\theta_1\theta_2(\theta_1-\theta_2)(\beta_{13}-\theta_3)(\beta_{23}-\theta_3)+\theta_2\theta_3(\theta_2-\theta_3)(\beta_{11}-\theta_1)(\beta_{21}-\theta_1)\\
&\quad+\theta_1\theta_3(\theta_3-\theta_1)(\beta_{12}-\theta_2)(\beta_{22}-\theta_2),
\end{align*}
the system~\eqref{condsyst} for type (2,2,2) power-law stored energy functions has a unique solution if and only if $A_{(2,2,2)}\neq 0$, namely
\begin{align*}
&\alpha_{11}=\frac{\theta_1-\beta_{21}}{A(\beta_{11}-\beta_{21})}[(\theta_2(\beta_{23}-\theta_3)(\beta_{13}-\theta_3)-\theta_{3}(\beta_{22}-\theta_2)(\beta_{12}-\theta_2)+2\theta_2\theta_3(\theta_3-\theta_2)\frac{1-2\nu}{1+\nu}]\\
&\alpha_{12}=\frac{\beta_{22}-\theta_2}{A(\beta_{12}-\beta_{22})}[(\theta_1(\beta_{23}-\theta_3)(\beta_{13}-\theta_3)-\theta_{3}(\beta_{21}-\theta_1)(\beta_{11}-\theta_1))+2\theta_1\theta_3(\theta_3-\theta_1)\frac{1-2\nu}{1+\nu}]\\
&\alpha_{13}=\frac{\theta_3-\beta_{23}}{A(\beta_{13}-\beta_{23})}[(\theta_1(\beta_{22}-\theta_2)(\beta_{12}-\theta_2)-\theta_{2}(\beta_{21}-\theta_1)(\beta_{11}-\theta_1))+2\theta_1\theta_2(\theta_2-\theta_1)\frac{1-2\nu}{1+\nu}]\\
&\alpha_{21}=-\frac{\beta_{11}-\theta_1}{\beta_{21}-\theta_1}\alpha_{11},\quad\alpha_{22}=-\frac{\beta_{12}-\theta_2}{\beta_{22}-\theta_2}\alpha_{12},\quad\alpha_{23}=-\frac{\beta_{13}-\theta_3}{\beta_{23}-\theta_3}\alpha_{13}.
\end{align*}
From these expressions it is clear that the conditions on the exponents $\theta_j$, $\beta_{ij}$ for the existence of the Lam\'e type (2,2,2) power-law stored energy function are
\[
(\theta_1,\theta_2,\theta_3)\neq (0,0,0),\quad \theta_{j}\neq \beta_{ij},\quad i=1,2,\quad j=1,2,3.
\] 
Similar conditions can be found for the other types in~\eqref{lametypes}. Moreover all Lam\'e types can be derived from the types (2,2,2), (1,3) and (3,1) in the limits given in Figure~\ref{lamefig}.
\begin{figure}
\begin{center}
\begin{minipage}[t]{0.4\textwidth}
\[
 \xymatrix{
         & & (1,2,1)\\
        & (2,2,1) \ar[ur]^{\beta_{11}=\theta_1} \ar[r]^{\beta_{12}=\theta_2} & (2,1,1)\\
        (2,2,2) \ar[ur]^{\beta_{13}=\theta_3}  \ar[dr]_{\beta_{11}=\theta_1} \ar[r]^{\beta_{12}=\theta_2} &(2,1,2)\ar[ur]^{\beta_{13}=\theta_3} \ar[r]^{\beta_{11}=\theta_1} &(1,1,2)\\
        &(1,2,2) \ar[ur]^{\beta_{12}=\theta_2} \ar[r]_{\beta_{13}=\theta_3} &(1,2,1)
         }
         \]
         \end{minipage}\qquad
         \begin{minipage}[t]{0.4\textwidth}
         \[
         \xymatrix{
         (2,3) \ar[r]^{\beta_{11}=\theta_1}& (1,3) \ar[dr]^{\nu=1/2}&\\
         & &(1,1)\\
         (3,2)\ar[r]^{\beta_{12}=\theta_2} & (3,1)\ar[ur]_{\nu=1/2}&
         }
         \]
      {\it Remark.} The same limits hold by replacing $\beta_{1j}$ with $\beta_{2j}$.
        \end{minipage} 
        \end{center}
        \caption{Relation between the Lam\'e type power-law stored energy functions. 
        }\label{lamefig}
        \end{figure}
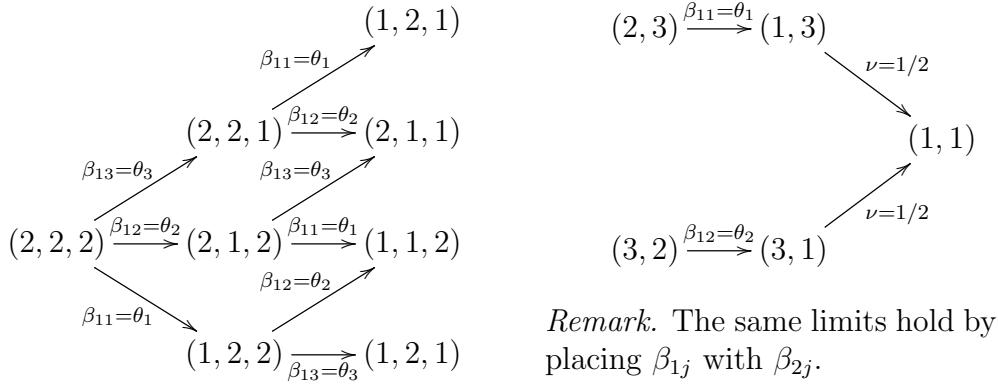

\subsection{Examples}\label{exampelsSec}
In this section we give some examples of power-law stored energy functions for spherically symmetric bodies; the original Lagrangian form of these models without symmetry assumptions can be found in~\cite{Ciarlet, John, MH, Sig43}. 
\subsubsection*{Saint Venant-Kirchhoff model}
Saint Venant-Kirchhoff (SVK) materials have the following Lam\'e type $(3,2)$ power-law stored energy function
\begin{align}\label{SVK}
\kappa^{-1}\widehat{w}_\mathrm{SVK}(\delta,\eta)&=\eta^{-4/3}\left(\frac{3(1-\nu)}{8(1+\nu)}\left(\frac{\delta}{\eta}\right)^{-4}+\frac{3\nu}{2(1+\nu)}\left(\frac{\delta}{\eta}\right)^{-2}+\frac{3}{4(1+\nu)}\right)\nonumber\\
&\quad+\eta^{-2/3}\left(-\frac{3}{4}\left(\frac{\delta}{\eta}\right)^{-2}-\frac{3}{2}\right)+\frac{9}{8}.
\end{align}

\subsubsection*{(Quasi linear) Signorini model}
Signorini materials have the following non-Lam\'e type power-law stored energy function
\begin{align}\label{Sig}
\kappa^{-1}\widehat{w}_\mathrm{Sig}(\delta,\eta) = &\frac{3(5+8\nu)-\tau(1+\nu)}{16(1+\nu)}\eta^{-1}\left(\frac{\delta}{\eta}\right)^{-1}\nonumber\\
&+\eta^{-\frac{1}{3}}\left(\frac{\tau(1+\nu)-3(1+4\nu)}{4(1+\nu)}\left(\left(\frac{\delta}{\eta}\right)^{-1}+\frac{1}{2}\left(\frac{\delta}{ \eta}\right)\right)\right) \nonumber \\
&+\eta^{\frac{1}{3}} \left(\frac{3}{4(1+\nu)}\left(\frac{\delta}{\eta}\right)^{-1}+\frac{3+\tau(1+\nu)}{4(1+\nu)}\left(\frac{\delta}{\eta}\right)+\frac{3-\tau(1+\nu)}{16(1+\nu)}\left(\frac{\delta}{\eta}\right)^3\right)\nonumber\\
&-\frac{3(1-2\nu)+\tau(1+\nu)}{2(1+\nu)},
\end{align}
where $\tau$ is a (dimensionless) constant. 
Except for some particular values of $\tau$, this stored energy function is of type (1,2,3). 
The case $\tau=0$ is known as quasi linear Signorini model. We shall restrict to the latter case in the rest of the paper. 

\subsubsection*{(Quasi linear) John model}
John materials have the following non-Lam\'e type power-law stored energy function
\begin{align}\label{John}
\kappa^{-1}\widehat{w}_\mathrm{John}(\delta,\eta) = &-\frac{\epsilon(1+4\nu)+3(1-2\nu)}{(1+4\nu)}\eta^{-1}\left(\frac{\delta}{ \eta}\right)^{-1}\nonumber\\
&+\eta^{-\frac{2}{3}}\left(\frac{3}{2(1+4\nu)}\left(\frac{\delta}{\eta}\right)^{-2}+2\frac{3+\epsilon(1+4\nu)}{1+4\nu}\left(\frac{\delta}{ \eta}\right)^{-1}+\frac{6+\epsilon(1+4\nu)}{1+4\nu}\right) \nonumber \\
&+\eta^{-\frac{1}{3}} \left(-\frac{6(1+\nu)+\epsilon(1+4\nu)}{1+4\nu}\left(\frac{\delta}{\eta}\right)^{-1}-2\frac{6(1+\nu)+\epsilon(1+4\nu)}{1+4\nu}\right)\nonumber\\
&+\frac{2\epsilon(1+4\nu) +3(5+8\nu)}{2(1+4\nu)},
\end{align}
where $\epsilon$ is a (dimensionless) constant. 
Except for some particular values of $\epsilon$, this stored energy function is of type (1,3,2).  
From now on we restrict to the case $\epsilon=0$, which we called the quasi-linear John model. 

{\it Remark.} The John model discussed in this section is actually just a special case of a larger family of stored energy functions, called harmonic, introduced by Fritz John in~\cite{John}. 

\subsubsection*{Hadamard model}
Hadamard materials are hyperelastic materials with the following Lam\'e type (2,1,2) stored energy function
\begin{align}\label{Had}
\kappa^{-1}\widehat{w}_{\mathrm{Had}}(\delta,\eta)&=\eta^{-4/3}\left(\frac{3}{2(1+\nu)}\left(\frac{\delta}{\eta}\right)^{-2}+\frac{3}{4(1+\nu)}\right)-\eta^{-1}\left(\frac{3(1-\nu)}{1+\nu}\left(\frac{\delta}{\eta}\right)^{-1}\right)\nonumber\\
&\quad+\eta^{-2/3}\left(-\frac{3\nu}{2(1+\nu)} \left(\frac{\delta}{\eta}\right)^{-2}-\frac{3\nu}{1+\nu}\right)+\frac{3(1+2\nu)}{4(1+\nu)},\quad \nu> 0.
\end{align}

{\it Remark.} Hadamard materials are defined up to an additive term $h(\delta)$, which was chosen $h(\delta)\sim\delta^{-1}$ in~\eqref{Had}.
\subsubsection*{Affine models}
In this section we derive the power-law stored energy functions for which the Equation~\eqref{staticsystem} satisfied by static self-gravitating bodies admits self-similar type solutions, i.e., solutions of the form $\delta(r)=c r^\alpha$, for some $c,\alpha\in\R$. Positivity of the mass requires $\alpha>-3$. Moreover $\eta(r)=\frac{3}{3+\alpha}\delta(r)$. Substituting in~\eqref{staticsystem}, we obtain the following equation on the constitutive function:
\begin{equation}\label{conditionselfsim}
\widehat{a}\left(\delta,\frac{3}{3+\alpha}\delta\right)\alpha-\widehat{b}\left(\delta,\frac{3}{3+\alpha}\delta\right)\frac{\alpha}{\alpha+3}+\frac{4\pi}{3}\mathcal{K}^2\frac{3}{3+\alpha}c^{-2/\alpha}\delta^{1+2/\alpha}=0.
\end{equation}
A stored energy function that satisfies~\eqref{conditionselfsim} will be called affine.
\begin{proposition}\label{propsim}
A necessary condition for a power-law material to satisfy~\eqref{conditionselfsim} for all $\delta>0$ is that 
\begin{equation}\label{ass}
\text{there exists a unique $q\in\{1,\dots,m\}$ such that $I_q\neq\varnothing$.}
\end{equation}
When~\eqref{ass} holds, 
~\eqref{conditionselfsim} is satisfied with $\alpha>-3$, $c>0$ and for all $\delta>0$ if and only if $\theta:=\theta_q\notin[1/3,1]$,
\[
C(\theta):=\frac{2}{\theta-1}\left[\frac{1-\theta}{1-3\theta}\,\widehat{b}\left(1,\frac{3-3\theta}{1-3\theta}\right)-\widehat{a}\left(1,\frac{3-3\theta}{1-3\theta}\right)\right]>0
\]
and
\begin{equation}\label{alphac}
\alpha=\frac{2}{\theta-1}, \quad c=\left(\frac{3}{4\pi\mathcal{K}^2}\right)^{\frac{1}{1-\theta}}\frac{1-3\theta}{3-3\theta}C(\theta)^{\frac{1}{1-\theta}}.
\end{equation}
\end{proposition}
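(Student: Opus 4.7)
The plan is to reduce~\eqref{conditionselfsim} to a finite sum of distinct monomials in $\delta$ and then use linear independence of power functions to read off the conditions. First I will substitute $\eta=3\delta/(3+\alpha)$, which is forced by the mass integral $\eta(r)=3r^{-3}\int_0^r\delta(s)s^2\,ds$ evaluated on the ansatz $\delta(r)=cr^{\alpha}$. Since the ratio $\delta/\eta=(3+\alpha)/3$ is then a constant, each inner sum in~\eqref{abcpower} collapses to a constant $A_j$ or $B_j$ depending only on $\alpha$ and the exponents, and each outer term reduces to a single power $\delta^{\theta_j}$. Combined with the identity $r^{2}\delta=c^{-2/\alpha}\delta^{1+2/\alpha}$, equation~\eqref{conditionselfsim} becomes
$$\sum_{j\in J}\kappa\,\eta_{*}^{\theta_j}\Bigl[\alpha A_j-\tfrac{\alpha}{\alpha+3}B_j\Bigr]\delta^{\theta_j}+\frac{4\pi\mathcal{K}^{2}}{3+\alpha}\,c^{-2/\alpha}\delta^{1+2/\alpha}=0,$$
with $\eta_{*}:=3/(3+\alpha)$.

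The gravitational coefficient is nonzero whenever $c>0$ and $\alpha>-3$ is finite, so by linear independence of distinct power functions the exponent $1+2/\alpha$ must coincide with some $\theta_q$; this forces $\alpha=2/(\theta_q-1)$ and in particular $\theta_q\neq 1$. If $|J|\geq 2$, every remaining index $j\in J\setminus\{q\}$ imposes an extra algebraic identity $\alpha A_j=\frac{\alpha}{\alpha+3}B_j$. Since the coefficients $\alpha_{ij}$ are already pinned down (or strongly constrained) by~\eqref{condsyst} across an interval of admissible Poisson ratios, such additional identities cannot be enforced; this establishes the necessity of~\eqref{ass}.

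Granted $|J|=1$ with $\theta:=\theta_q$, the display reduces to a single scalar relation once the exponent match is imposed. The condition $\alpha>-3$ with $\alpha=2/(\theta-1)$ is equivalent to $\theta\notin[1/3,1]$ by elementary inequalities. For the scalar relation, I will use that $\eta_{*}=(3-3\theta)/(1-3\theta)$ and, with $|J|=1$, that $\widehat{a}(1,\eta_{*})=\kappa\eta_{*}^{\theta}A$ and $\widehat{b}(1,\eta_{*})=\kappa\eta_{*}^{\theta}B$; substituting into the definition of $C(\theta)$ one recognizes the elastic bracket above as $-C(\theta)$. The relation therefore reads $c^{-2/\alpha}=(3+\alpha)C(\theta)/(4\pi\mathcal{K}^{2})$. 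Since $3+\alpha>0$, existence of $c>0$ is equivalent to $C(\theta)>0$; raising to the power $-\alpha/2=1/(1-\theta)$ and using $3+\alpha=(1-3\theta)/(1-\theta)$ yields the formula for $c$ in~\eqref{alphac}.

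The hardest step is the necessity of~\eqref{ass}: a priori some of the extra coefficients $\alpha A_j-\frac{\alpha}{\alpha+3}B_j$ could vanish by accident, so a fully rigorous argument requires either a direct appeal to the overdetermined nature of~\eqref{condsyst} viewed as a system of identities in $\nu$, or a case-by-case inspection of the admissible power-law types. The remaining steps—matching coefficients and rewriting $c^{-2/\alpha}$ in closed form—are routine algebra.
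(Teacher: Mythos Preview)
Your approach is essentially identical to the paper's: substitute the self-similar ansatz so that $\widehat{a}$ and $\widehat{b}$ become homogeneous of degree $\theta_j$ in $\delta$ for each $j\in J$, match the resulting powers with the gravitational term $\delta^{1+2/\alpha}$, and solve the single surviving scalar relation for $c$. On the necessity of~\eqref{ass} you are in fact more careful than the paper, which simply asserts that with two nonempty $I_j$ the left-hand side ``would contain two different powers of $\delta$'' without addressing your valid concern that an elastic coefficient $\alpha A_j-\tfrac{\alpha}{\alpha+3}B_j$ might vanish accidentally; the paper does not fill this gap either.
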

\begin{proof}
If $I_j\neq \varnothing$ for more than one value of $j\in\{1,\dots, m\}$, the left hand side of~\eqref{conditionselfsim} would contain two different powers of $\delta$ and thus~\eqref{conditionselfsim} cannot be verified for all $\delta>0$. If $I_j\neq\varnothing$ only for a unique $j=q\in\{1,\dots,m\}$ and setting $\theta=\theta_q$ we obtain
\begin{align*}
&\widehat{a}\left(\delta,\frac{3}{3+\alpha}\delta\right)=\widehat{a}\left(1,\frac{3}{3+\alpha}\right)\left(\frac{3}{3+\alpha}\right)^{\theta}\delta^{\theta},\\
&\widehat{b}\left(\delta,\frac{3}{3+\alpha}\delta\right)=\widehat{b}\left(1,\frac{3}{3+\alpha}\right)\left(\frac{3}{3+\alpha}\right)^{\theta}\delta^{\theta},
\end{align*}
hence a necessary condition for~\eqref{conditionselfsim} to hold for all $\delta>0$ is that $\theta=1+2/\alpha$, which gives the formula for $\alpha$ in~\eqref{alphac} as well as the condition $\theta\notin[1/3,1]$. Replacing in~\eqref{conditionselfsim}  we find that~\eqref{conditionselfsim} holds if and only if $c$ is given as in~\eqref{alphac}, hence $C(\theta)>0$ must hold in order that $c>0$. 
\end{proof}
The only example considered so far that satisfies the assumption~\eqref{ass} is the John model, in which case the self similar solution becomes the one found in~\cite{AC19}. Another interesting example of affine power-law stored energy function is the following Lam\'e type:
\begin{align}
&\kappa^{-1}\widehat{w}(\delta,\eta)=\frac{1}{1+\theta}\delta^{-1}-\frac{1}{\theta}\nonumber\\
&+\eta^{\theta}\Big(\frac{3(1-\nu)}{\beta(\beta-1)(1+\nu)}(\delta/\eta)^{\beta-1}+\Big(\frac{3(1-\nu)}{\beta(1+\nu)}-\frac{1}{1+\theta}\Big)(\delta/\eta)^{-1}+\frac{1}{\theta}-\frac{3(1-\nu)}{(\beta-1)(1+\nu)}\Big),\label{newexample}
\end{align}
which is of type (3,1) for $\theta<-1$ and of type (1,3) for $\theta>-1$.
Applying the result of the Proposition~\ref{propsim} to the stored energy function~\eqref{newexample}  we obtain the self-similar solution of~\eqref{staticsystem} given by $\delta(r)=cr^\alpha$, where $\alpha,c$ are given by~\eqref{alphac} and
\[
C(\theta)=\frac{1-\nu}{1+\nu}\frac{18\left(\frac{3-3\theta}{1-3\theta}\right)^\theta(5-2\theta)}{(1-3\theta)^2}>0\quad \text{if and only if }\quad \theta\in (-\infty,1/3)\cup(1,5/2).
\]

{\it Remark.} The stored energy function~\eqref{newexample} is a special case of the class of polytropic stored energy functions introduced in~\cite{SC}.
\subsubsection*{Barotropic fluids}
Power-law stored energy functions of type ($1,1,\dots,1$) correspond to barotropic fluids. The most general stored energy function in this case has the form
\[
\kappa^{-1}\widehat{w}_\mathrm{fluid}^{(m)}(\delta)=\alpha_{11}\delta^{\theta_1}+\alpha_{12}\delta^{\theta_2}+\dots+\alpha_{m1}\delta^{\theta_m}-(\alpha_{11}+\alpha_{12}+\dots+\alpha_{1m}),
\]
where $\theta_1<\theta_2<\dots<\theta_m$ are all different from zero and at least one is different from $-1$.
The system~\eqref{condsyst} on the coefficients $\alpha_{ij}$ reduces to
\begin{subequations}\label{fluidsystem}
\begin{align}
&\alpha_{11}\theta_1+\alpha_{12}\theta_2+\dots \alpha_{1m}\theta_m=0\\
&\alpha_{11}\theta_1^2+\alpha_{12}\theta_2^2+\dots \alpha_{1m}\theta_m^2=1\\
&\alpha_{11}\theta_1^2+\alpha_{12}\theta_2^2+\dots \alpha_{1m}\theta_m^2=3(1-\nu)/(1+\nu).
\end{align}
\end{subequations}
We see that $\nu=1/2$ must hold for the system~\eqref{fluidsystem} to admit solutions. For $m>3$ (and $\nu=1/2$) the system~\eqref{fluidsystem} has infinitely many solutions, while for $m=2$ the system~\eqref{fluidsystem} admits the unique solution
\[
\alpha_{11}=\frac{1}{\theta_1(\theta_1-\theta_2)},\quad \alpha_{12}=-\frac{1}{\theta_2(\theta_1-\theta_2)}.
\]
Thus the type (1,1) is the only Lam\'e type power-law fluid stored energy function. The constitutive functions for the principal pressures of these materials are
\begin{equation}\label{p11}
\prad^{(1,1)}(\delta)=\ptan^{(1,1)}(\delta)=\kappa\frac{\delta(\delta^{\theta_2}-\delta^{\theta_1})}{\theta_2-\theta_1},\quad \theta_2>\theta_1.
\end{equation}
For $\theta_2=\gamma-1$ and $\theta_1=-1$,~\eqref{p11} becomes the constitutive function of polytropic fluids with polytropic exponent $\gamma$, see~\cite{SC}.


\section{Numerical results}\label{numsec}
The purpose of this final section is to investigate numerically whether some of the assumptions made in~\cite{AC19} to prove the existence of static self-gravitating elastic balls are necessary or not. The results concern the Saint Venant-Kirchhoff, John, Hadamard and Signorini model, each discussed in a separate subsection. For each of these models, there exists (a necessarily unique) $\Delta_\flat\in (1,\infty]$ such that $\partial_\delta \prad(\delta,\delta)>0$ for $0<\delta<\Delta_\flat$ and if $\Delta_\flat<\infty$ then $\partial_\delta\prad(\Delta_\flat,\Delta_\flat)=0$; in particular, if $\Delta_\flat<\infty$ and $\delta_c=\delta(0)>\Delta_\flat$, the inequality $\partial_\delta\prad(\delta,\eta)>0$ is violated at the center. As shown in~\cite{SC}, this inequality corresponds to the hyperbolicity condition for the system of equations describing the motion of spherically symmetric elastic balls in Eulerian variables. 

For more numerical results on static self-gravitating solutions for the models in this section, e.g., the mass-radius diagram and the existence of multi-body distributions, we refer to~\cite{Astrid}. 
\subsection{Saint Venant-Kirchhoff materials}\label{sec:svk}
The Saint Venant-Kirchhoff material model is hyperelastic with stored energy function~\eqref{SVK},
which yields the principal pressures
\begin{align*}
    \kappa^{-1}\prad(\delta,\eta) &= \eta^{-1/3}\left(-\frac{3}{2}\frac{1-\nu}{1+\nu}\left(\frac{\delta}{\eta}\right)^{-3}-\frac{3\nu}{1+\nu}\left(\frac{\delta}{\eta}\right)^{-1}\right)+\frac{3}{2}\eta^{1/3}\left(\frac{\delta}{\eta}\right)^{-1}, \\
    \kappa^{-1}\ptan(\delta,\eta) &= \eta^{-1/3}\left(-\frac{3}{2}\frac{\nu}{1+\nu}\left(\frac{\delta}{\eta}\right)^{-1}-\frac{3}{2(1+\nu)}\left(\frac{\delta}{\eta}\right)\right)+\frac{3}{2}\eta^{1/3}\left(\frac{\delta}{\eta}\right).
\end{align*}
At the center
\[
    \kappa^{-1}\prad(\delta_c,\delta_c) = \kappa^{-1}\ptan(\delta_c,\delta_c) = \frac{3 \left(\delta _c^{2/3}-1\right)}{2 \delta _c^{1/3}}, \quad
    \kappa^{-1}\partial_\delta \prad(\delta_c,\delta_c) = \frac{3 \left(3-\nu-(1+\nu) \delta _c^{2/3}\right)}{2 (1+\nu ) \delta _c^{4/3}}.
\]
The pressures are positive at the center if and only if $\delta_c >1$, and the constant $\Delta_\flat$ is given by
\begin{equation*}
    \Delta_\flat = \left( \frac{3-\nu}{1+\nu} \right)^{3/2}.
\end{equation*}
The following theorem was proved in~\cite{AC19}:
\begin{theorem}\label{svktheo}
	When the elastic material is given by the Saint Venant-Kirchhoff model, the condition $\delta_c:=\rho_c/ \mathcal{K}>1$ is necessary for the existence of regular static self-gravitating balls. When $1<\delta_c<\Delta_\flat$ 
	there exists a unique strongly regular static self-gravitating ball with central density $\rho(0)=\rho_c$. 
\end{theorem}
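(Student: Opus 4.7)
The plan is to prove the necessity of $\delta_c > 1$ directly from the formula for the central pressure, and to obtain existence and uniqueness by specializing the general theorem of~\cite{AC19} to the SVK constitutive functions derived from~\eqref{SVK}.

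For necessity, I would note that any smooth regular solution satisfies $\eta(0) = \delta(0) = \delta_c$, so the central radial pressure equals $\prad(\delta_c,\delta_c) = \frac{3\kappa(\delta_c^{2/3}-1)}{2\delta_c^{1/3}}$, which is positive if and only if $\delta_c > 1$. Since strict positivity of the central pressure is necessary for a regular self-gravitating ball of positive radius---a standard consequence of analyzing $r\mapsto r^2\prad(r)$ near the origin via~\eqref{generalstatic}, using the isotropy $\ptan(0) = \prad(0)$ and the dominant negative gravitational contribution $-G\rho m/r^2$ at small $r$---the condition $\delta_c > 1$ is necessary.

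For existence and uniqueness in the range $1 < \delta_c < \Delta_\flat$, I would first observe that $\delta_c < \Delta_\flat$ is equivalent to $\widehat{a}(\delta_c,\delta_c) > 0$, so the quasi-linear system~\eqref{staticsystem} is non-degenerate at $r=0$. Desingularizing the $1/r$ terms in the standard way---e.g.\ by introducing $(\eta-\delta)/r^2$ as an auxiliary variable, or by the series-expansion argument of~\cite{AC18,AC19}---then yields a unique smooth local solution $(\delta,\eta)$ with $\delta(0) = \eta(0) = \delta_c$. I would continue this solution on a maximal interval $[0,R)$ and use the explicit formulas~\eqref{abcpower} applied to~\eqref{SVK} to verify that $\widehat{b}(\delta,\eta) \geq 0$ along the trajectory, so that the right-hand side of the $\delta$-equation is negative and $\delta$ decreases monotonically. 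Since $\delta_c < \Delta_\flat$ and $\delta$ is decreasing, the hyperbolicity $\widehat{a} > 0$ is preserved on $[0,R)$, and $\prad$ decreases to zero at a finite boundary $R$, producing the strongly regular ball; uniqueness then follows from the uniqueness of the underlying initial value problem.

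The main difficulty I anticipate is the continuation step: rigorously verifying $\widehat{b} \geq 0$ along the solution and ruling out the possibility that $\delta$ stagnates at an interior value where $\prad$ is still strictly positive, uniformly for $\delta_c \in (1,\Delta_\flat)$. Both aspects are handled under abstract hypotheses in~\cite{AC19}, so the present theorem essentially amounts to checking those hypotheses explicitly for the SVK constitutive functions above.
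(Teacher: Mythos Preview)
Your approach is essentially the same as the paper's: the paper does not prove this theorem at all but simply states it and cites~\cite{AC19}, and your proposal likewise reduces the claim to the general existence theorem of~\cite{AC19} after computing the SVK-specific quantities $\prad(\delta_c,\delta_c)$ and $\Delta_\flat$. Your sketch is in fact more detailed than anything appearing in the present paper.

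One small caution on your continuation sketch: the assertion that ``$\delta$ decreasing and $\delta_c<\Delta_\flat$'' alone preserves $\widehat{a}>0$ is not quite sufficient as stated, since $\widehat{a}(\delta,\eta)$ depends on both variables and $\eta>\delta$ along the trajectory; the actual argument in~\cite{AC19} requires additional structural hypotheses on $\widehat{a}$ and $\widehat{b}$ in the full region $\{\eta\geq\delta\}$, not merely on the diagonal. You correctly flag this continuation step as the main difficulty and defer it to~\cite{AC19}, so this is not a gap in your proposal---just be aware that verifying those abstract hypotheses for SVK is where the real work lies.
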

We are interested in the following question: {\it 
Do static self-gravitating elastic balls exist in the Saint Venant-Kirchhoff material model when the strict hyperbolicity condition is violated at the center?}

We have found numerical evidence suggesting that regardless of the value of $\nu\in (-1,1/2]$, finite radius solutions cannot be constructed when $\delta_c \geq \Delta_\flat$. The density and pressures blow up almost immediately, see Figure \ref{fig:SVK}. Note that this kind of blow-up is not possible in the fluid case, since the pressure and the mass density of static self-gravitating fluids are decreasing functions of the radius.



\begin{figure}
    \begin{center}
    \subfigure[$\delta_c = 0.99\Delta_\flat$]{
        \includegraphics[width=0.45\textwidth, trim=0.2cm 0 0.3cm 0,clip]{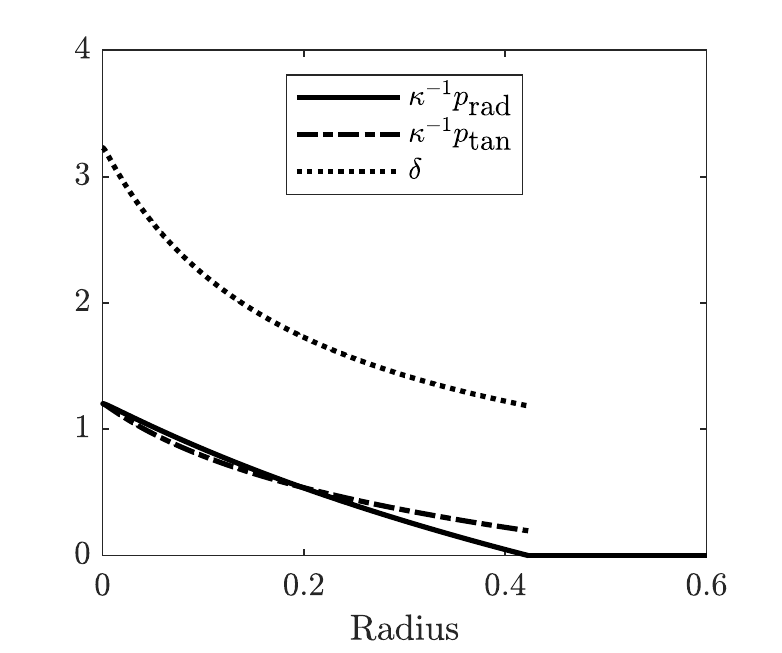}}\quad
        \subfigure[$\delta_c = 1.01\Delta_\flat$]{
        \includegraphics[width=0.45\textwidth, trim=0.2cm 0 0.3cm 0,clip]{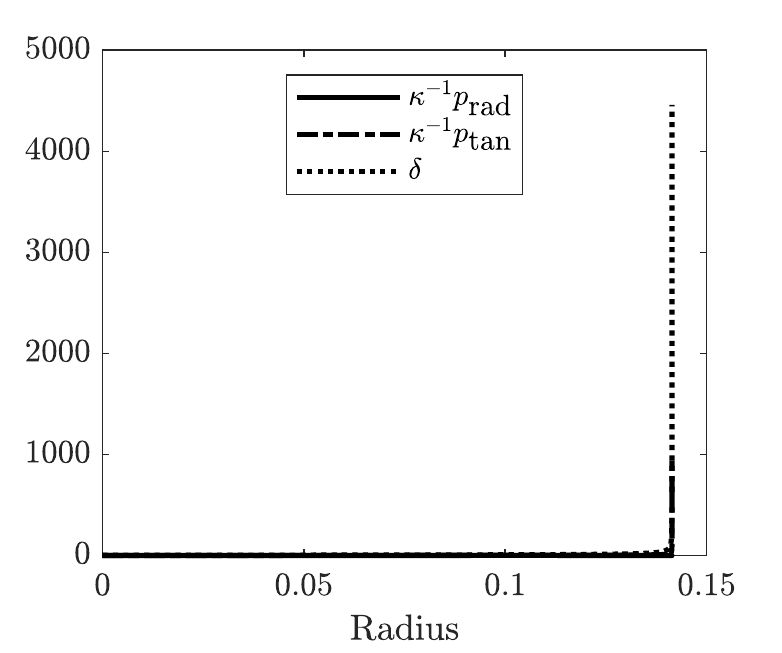}}
    \caption{Elastic balls constructed in the Saint Venant-Kirchoff material model with Poisson ratio $\nu = 0.25$ for center datum close to $\Delta_\flat$ ($\approx 3.26$).}
    \label{fig:SVK}
    \end{center}
\end{figure}



\subsection{Quasi-linear John materials}\label{sec:john}
Quasi-linear John materials are hyperelastic with stored energy function~\eqref{John}$_{\epsilon=0}$, 
which yields the principal pressures
\begin{align*}
&\kappa^{-1}\prad(\delta,\eta)=-\frac{3}{1+4\nu}\eta^{1/3}\left(\left(\frac{\delta}{\eta}\right)^{-1}+2\right)+6\frac{1+\nu}{1+4\nu}\eta^{2/3}+\frac{3(1-2\nu)}{1+4\nu},\\
&\kappa^{-1}\ptan(\delta,\eta)=-\frac{3}{1+4\nu}\eta^{1/3}\left(1+2\left(\frac{\delta}{\eta}\right)\right)+6\frac{1+\nu}{1+4\nu}\eta^{2/3}\left(\frac{\delta}{\eta}\right)+\frac{3(1-2\nu)}{1+4\nu}.
\end{align*}
At the center
\begin{align*}
    \kappa^{-1}\prad(\delta_c,\delta_c) = \kappa^{-1}\ptan(\delta_c,\delta_c) &= \frac{3 \left({\delta _c}^{1/3}-1\right)}{1+\nu} \left((2-\nu) {\delta _c}^{1/3}+2 \nu -1\right), \\
    \kappa^{-1}\partial_\delta \prad(\delta_c,\delta_c) &= \frac{3 (1-\nu )}{(\nu +1) \delta _c^{2/3}}.
\end{align*}
The pressures are positive at the center if and only if $\delta_c > 1$ or $0<\delta_c<\Delta_*$, where
\begin{equation*}
    \Delta_* = \left( \frac{1-2\nu}{2-\nu} \right)^3 < 1;
\end{equation*}
 the constant $\Delta_\flat$ is given by $\Delta_\flat = \infty$.
The following theorem was proved in~\cite{AC19}:
\begin{theorem}\label{johntheo}
	When the elastic material is given by the John model, for all $\delta_c:=\rho_c/\mathcal{K}>1$ there exists a unique strongly regular static self-gravitating ball with central density $\rho(0)=\rho_c$. 
\end{theorem}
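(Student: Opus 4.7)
Since the John model has $\Delta_\flat=\infty$, the coefficient $\widehat{a}(\delta,\delta)=3\kappa(1-\nu)/((1+\nu)\delta^{2/3})$ is strictly positive for every $\delta>0$, and moreover the model is affine in the sense of Proposition~\ref{propsim}. The natural approach is a shooting argument from the regular center $r=0$ with initial data $\eta(0)=\delta(0)=\delta_c$. First I would establish local existence of a unique analytic solution near $r=0$. The system~\eqref{staticsystem} has a removable singularity at the center because $\eta(r)-\delta(r)=O(r^2)$ for any smooth profile, and the combination $\widehat{b}(\delta,\eta)(\eta-\delta)/r$ admits a continuous limit at $r=0$ that together with the gravitational term forces $\delta'(0)=0$. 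A contraction argument in a weighted $C^1$ space on $[0,r_0]$, applied to the integral reformulation $\eta(r)=\frac{3}{r^3}\int_0^r\delta(s)s^2\,ds$, then yields the unique strongly regular local solution.

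Next I would continue this local solution forward. The target is a free boundary radius $r_b$ at which $\prad(\delta(r_b),\eta(r_b))=0$, matching the ball to vacuum. Three ingredients drive the extension: (i) hyperbolicity never degenerates since $\widehat{a}(\delta,\eta)>0$ by continuity from its positive value on the diagonal, using $\Delta_\flat=\infty$; (ii) $\eta(r)>\delta(r)$ for $r>0$ by~\cite{SC}, and the explicit John-model form of~\eqref{abcpower} gives $\widehat{b}$ a controlled sign along the trajectory; (iii) the gravitational contribution in~\eqref{staticsystem} is strictly negative. Together these produce $\delta'<0$ and monotone descent. To force termination at finite $r_b$ I would compare the trajectory to the self-similar affine profile $\delta(r)=cr^\alpha$ of Proposition~\ref{propsim}, which the paper notes is admissible for the John model, using it as a supersolution that bounds $\delta$ from above and guarantees the trajectory crosses the curve $\{\prad=0\}$ in bounded radius. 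Uniqueness then follows from local Lipschitz continuity of the right-hand side away from the center combined with uniqueness at the center.

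The hard part will be step two: simultaneously establishing global extension up to the free boundary and finite-radius termination there. One needs quantitative joint control on the pair $(\delta,\eta)$, not only on $\delta$, to rule out entry into a region where $\widehat{a}$ might vanish off the diagonal or where the anisotropic $1/r$ term could change sign; the $\eta$-coupling is nonlocal in $\delta$ through the integral defining it, so standard phase-plane monotonicity arguments do not apply verbatim. Converting the heuristic supersolution comparison with the self-similar profile of Proposition~\ref{propsim} into a rigorous barrier for this coupled integro-differential system is the technical core of the theorem, and is presumably where the bulk of the analytic work in~\cite{AC19} is concentrated.
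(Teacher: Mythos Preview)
The paper does not contain a proof of this theorem. Immediately before the statement the authors write ``The following theorem was proved in~\cite{AC19}'', and no argument is supplied here; the theorem is merely quoted so that the numerical experiments in Section~\ref{sec:john} can be put in context. Consequently there is no ``paper's own proof'' against which your proposal can be compared.

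As for the proposal itself, it is an outline rather than a proof, and you say as much in your final paragraph. A couple of the heuristic steps are shakier than you indicate. Your claim that $\widehat{a}(\delta,\eta)>0$ off the diagonal ``by continuity from its positive value on the diagonal, using $\Delta_\flat=\infty$'' is not a proof: $\Delta_\flat=\infty$ controls only the diagonal values $\widehat{a}(\delta,\delta)$, and continuity alone does not propagate positivity to the whole region $\{\eta>\delta>0\}$ without an a priori confinement of the trajectory. For the John model one can in fact compute $\widehat{a}$ explicitly from the $\prad$ given in Section~\ref{sec:john} and check its sign directly, which is the honest route. Likewise, the ``supersolution comparison'' with the self-similar profile of Proposition~\ref{propsim} is only a heuristic here: the system~\eqref{staticsystem} is an integro-differential equation in $\delta$ through the nonlocal definition of $\eta$, so a standard ODE comparison principle does not apply, and you have not specified in what sense the affine profile is a barrier. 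These are precisely the points where the actual work in~\cite{AC19} lies, as you yourself acknowledge; but that means what you have written is a roadmap to a proof that is carried out elsewhere, not a proof in its own right.
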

We are interested in the following question: {\it 
Is $\delta_c > 1$ a necessary condition or can finite radius elastic balls exist in the quasi-linear John material model when \mbox{$0<\delta_c<\Delta_*$}?}

In our numerical investigations we have found evidence suggesting the existence of a constant $\Delta_\circ$, dependent on $\nu$, such that finite radius balls do exist when $\delta_c \in [\Delta_\circ,\Delta_*)$ but not when $\delta_c < \Delta_\circ$, see Figure \ref{fig:John}. We have not a found a closed expression for $\Delta_\circ$, but Figure \ref{fig:John_Delta_circle} shows where in the $(\nu,\delta_c)$-plane finite radius balls could be constructed numerically. An interesting property of these solutions is that the tangential pressure is increasing with the radius rather than decreasing, as it appears to be the case for solutions with center datum $\delta_c > 1$, see Figure \ref{fig:John2}.

\begin{figure}
    \begin{center}
    \subfigure[$\delta_c = 0.90\Delta_*$]{
        \includegraphics[width=0.45\textwidth, trim=0.2cm 0 0.3cm 0,clip]{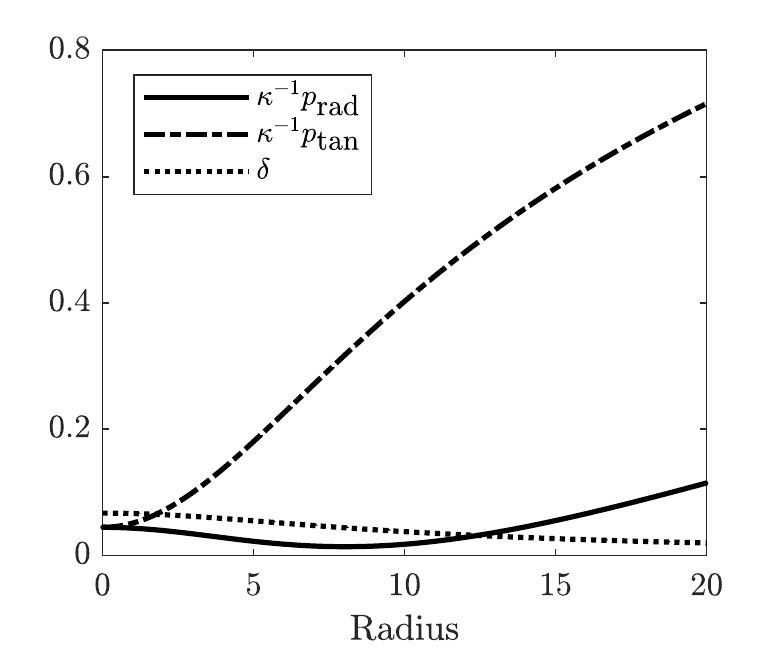}}\quad
        \subfigure[$\delta_c = 0.99\Delta_*$]{
        \includegraphics[width=0.45\textwidth, trim=0.2cm 0 0.3cm 0,clip]{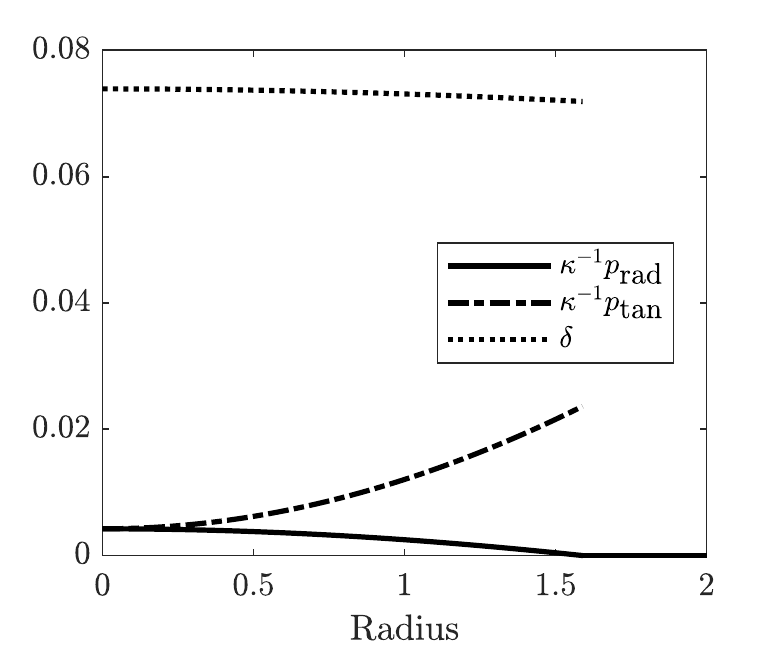}}
    \caption{Elastic balls constructed in the quasi-linear John material model with Poisson ratio $\nu = 0.1$ for center datum smaller than $\Delta_*$ ($\approx 0.0746$). Only in (b) does the ball have finite radius. There seems to exist a $\Delta_\circ$ such that when $\delta_c = \Delta_\circ$ the radial pressure is tangent to the horizontal axis at one point.}
    \label{fig:John}
    \end{center}
\end{figure}

\begin{figure}
\begin{center}
\includegraphics[width=0.48\textwidth,trim=0cm 0 0.5cm 0, clip ]{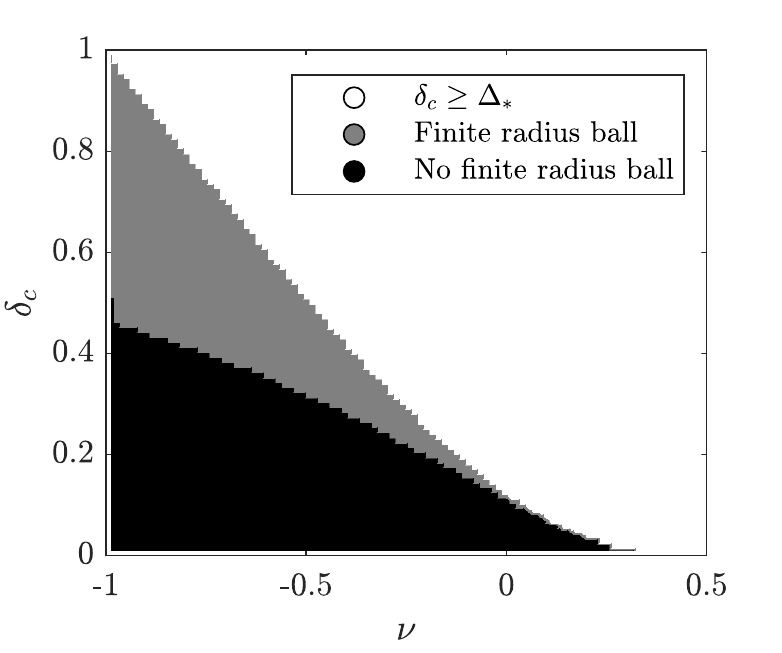}
\caption{The gray region indicates where finite radius balls have been found numerically in the quasi-linear John model for $\delta_c \in (0,\Delta_*)$. The border between the black and gray regions approximates the proposed $\Delta_\circ(\nu)$.}
    \label{fig:John_Delta_circle}
\end{center}
\end{figure}


\begin{figure}
    \begin{center}
    \subfigure[$\nu=-0.5$]{
        \includegraphics[width=0.45\textwidth, trim=0.2cm 0 0.3cm 0,clip]{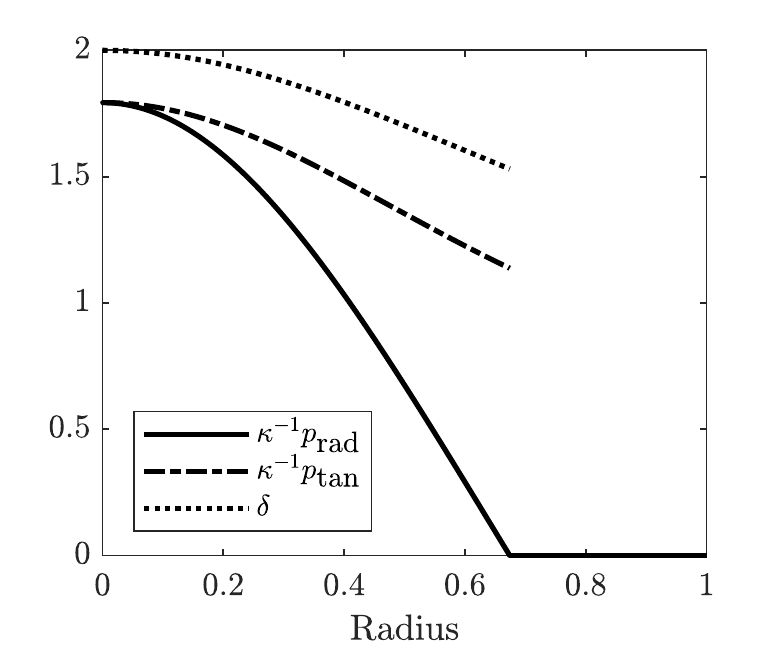}}\quad
        \subfigure[$\nu=0.25$]{
        \includegraphics[width=0.45\textwidth, trim=0.2cm 0 0.3cm 0,clip]{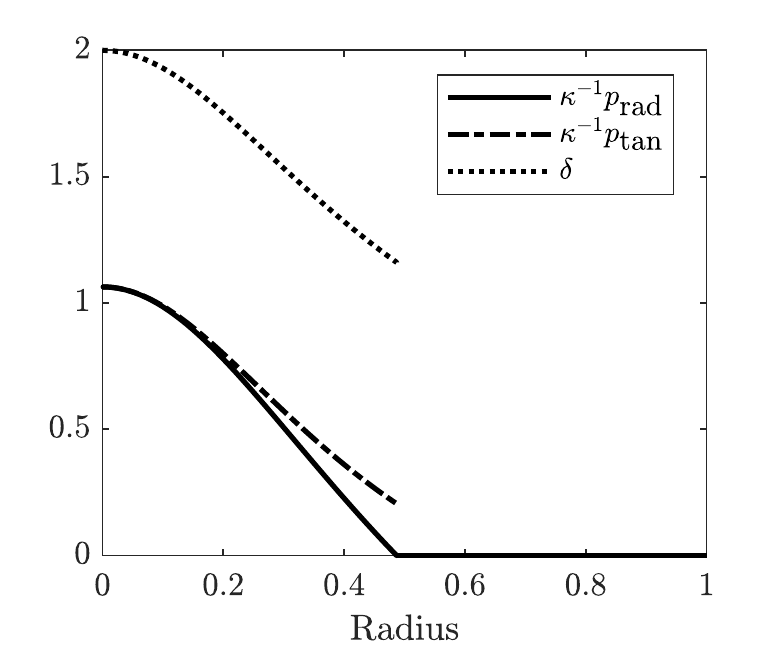}}
    \caption{Elastic balls constructed in the quasi-linear John material model with center datum $\delta_c = 2$. The tangential pressure decreases with the radius as opposed to when $\delta_c < 1$, see Figure \ref{fig:John}.}
    \label{fig:John2}
    \end{center}
\end{figure}

    \subsection{Hadamard materials}\label{sec:hadamard}
Hadamard materials are hyperelastic materials with stored energy function~\eqref{Had},
which yields the principal pressures
\begin{align*}
&\prad(\delta,\eta)=-\frac{3}{1+\nu}\eta^{-1/3}\left(\frac{\delta}{\eta}\right)^{-1}+\frac{3\nu}{1+\nu}\eta^{1/3}\left(\frac{\delta}{\eta}\right)^{-1}+3\frac{1-\nu}{1+\nu},\\
&\ptan(\delta,\eta)=-\frac{3}{2(1+\nu)}\eta^{-1/3}\left(\left(\frac{\delta}{\eta}\right)^{-1}+\left(\frac{\delta}{\eta}\right)\right)+\frac{3\nu}{1+\nu}\eta^{1/3}\left(\frac{\delta}{\eta}\right)+3\frac{1-\nu}{1+\nu}
\end{align*}
and at the center we have
\begin{align*}
    \kappa^{-1}\prad(\delta_c,\delta_c) = \kappa^{-1}\prad(\delta_c,\delta_c) &= \frac{3 \left({\delta _c}^{1/3}-1\right) \left(\nu  {\delta _c}^{1/3}+1\right)}{(\nu +1) {\delta _c}^{1/3}}, \\
    \kappa^{-1}\partial_\delta \prad(\delta_c,\delta_c) &= \frac{3 \left(1-\nu  \delta _c^{2/3}\right)}{(\nu +1) \delta _c^{4/3}}.
\end{align*}
The pressures are positive at the center if and only if $\delta_c>1$; the constant $\Delta_\flat$ is given by
\begin{equation*}
    \Delta_\flat=\left(\frac{1}{\nu}\right)^{3/2}.
\end{equation*}
The following theorem has been proved in~\cite{AC19}:
\begin{theorem}\label{hadtheo}
When the elastic material is given by the Hadamard model, the condition $\delta_c:=\rho_c/ \mathcal{K}>1$ is necessary for the existence of regular static self-gravitating balls. For 
\[
1<\delta_c<\left(\frac{1}{2\nu}\right)^{3/2}=\Delta_\sharp
\] 
there exists a unique strongly regular static self-gravitating ball with central density $\rho(0)=\rho_c$. 
\end{theorem}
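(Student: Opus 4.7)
The plan is to reduce to the general existence theorem for static self-gravitating power-law elastic balls established in \cite{AC19}, specialized to the Hadamard constitutive functions. First I would establish the necessity of $\delta_c > 1$. From the expression above, at the center
\[
\kappa^{-1}\prad(\delta_c,\delta_c) = \frac{3(\delta_c^{1/3}-1)(\nu\delta_c^{1/3}+1)}{(\nu+1)\delta_c^{1/3}}.
\]
Since the statement is restricted to $\nu > 0$, the factor $\nu\delta_c^{1/3}+1$ is strictly positive on $\delta_c > 0$, so the central pressure is non-negative only if $\delta_c \geq 1$. A regular ball with vacuum exterior requires $\prad$ strictly positive at $r=0$ and vanishing at the boundary, so strict inequality is forced.

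For sufficiency, I would instantiate the general power-law existence theorem of \cite{AC19} for the Hadamard material. This means computing $\widehat a(\delta,\eta)$ and $\widehat b(\delta,\eta)$ from \eqref{abc} with the pressures displayed in the statement, and then verifying the structural hypotheses of \cite{AC19} on the rectangle $[1,\delta_c]\times[1,\delta_c]$. The strict hyperbolicity condition $\partial_\delta\prad(\delta,\delta) = \frac{3(1-\nu\delta^{2/3})}{(\nu+1)\delta^{4/3}} > 0$ alone only forces $\delta_c < \Delta_\flat = \nu^{-3/2}$; I would expect the stricter bound $\Delta_\sharp = (2\nu)^{-3/2}$ to arise from a convexity or sign condition on $\widehat b$ (for instance along the diagonal or on an adjacent region) that is needed to run the shooting/continuation argument of \cite{AC19}. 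Since $\widehat a$ and $\widehat b$ are polynomials of low degree in $\delta^{1/3}$ and $\eta^{1/3}$, the required inequalities reduce to elementary polynomial non-negativity statements on $[1,\delta_c]^2$ that can be checked by hand.

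Having verified these hypotheses, I would construct the solution by establishing local existence near $r=0$ via a regular-center formal power-series expansion, starting from $\eta(0)=\delta(0)=\delta_c$. Hyperbolicity $\widehat a > 0$ lets the ODE in \eqref{staticsystem} be solved for $\delta'$ with the correct sign, and the right-hand side at small $r$ is negative because of the gravity term $-\frac{4\pi G}{3}\mathcal{K}^2 r\eta\delta$, forcing $\delta$ to be strictly decreasing. Continuing the solution forward, I would show that $\delta$ reaches the value $1$ at a finite radius $R>0$, where $\prad$ vanishes and a smooth match to the vacuum exterior is possible. Uniqueness then follows from Picard--Lindel\"of once the ODE is desingularized at the center.

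The main obstacle I anticipate is identifying precisely which structural inequality from \cite{AC19} produces the bound $\Delta_\sharp = (2\nu)^{-3/2}$ rather than the weaker hyperbolicity bound $\Delta_\flat = \nu^{-3/2}$, and showing that this inequality is both necessary to the proof method and verifiable for the Hadamard model by a short algebraic calculation. Whether $\Delta_\sharp$ is a genuine threshold for the existence of balls or only an artifact of the sufficient conditions used in \cite{AC19} is precisely one of the questions investigated numerically in Section \ref{numsec}, which makes the algebraic origin of this bound well worth isolating.
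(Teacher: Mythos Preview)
The paper does not prove this theorem at all: it is merely quoted verbatim from \cite{AC19}, with the sentence ``The following theorem has been proved in~\cite{AC19}'' immediately preceding the statement. So there is no in-paper argument to compare against; the paper's ``proof'' is the citation.

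Your plan is therefore aligned with what the paper actually does: you reduce to the general power-law existence theorem of \cite{AC19} and propose to verify its hypotheses for the Hadamard constitutive functions. Your treatment of necessity via the sign of $\prad(\delta_c,\delta_c)$ is correct. For sufficiency, your outline (compute $\widehat a,\widehat b$, check the structural inequalities on the relevant region, run the regular-center local existence plus continuation argument) is an accurate high-level description of how \cite{AC19} proceeds. The one genuinely speculative part of your proposal is the identification of \emph{which} hypothesis of \cite{AC19} produces the sharper bound $\Delta_\sharp=(2\nu)^{-3/2}$ rather than the hyperbolicity bound $\Delta_\flat=\nu^{-3/2}$; you are right that this is the crux, and the paper itself flags exactly this point as open (the numerics in Section~\ref{sec:hadamard} suggest $\Delta_\sharp$ is only an artifact of the sufficient conditions and that balls exist up to $\Delta_\flat$). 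To complete your write-up you would need to consult \cite{AC19} and pin down the specific inequality responsible, rather than conjecture its form.
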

We are interested in the following question: {\it
Is the sufficient bound $\delta_c < \Delta_\sharp$ in theorem \ref{hadtheo} necessary or can it be replaced by the weaker bound $\delta_c < \Delta_\flat$?}

In our numerical investigations we have found evidence suggesting that the bound indeed can be replaced. Figure \ref{fig:Hadamard_sharp} shows finite radius balls with center datum both smaller than and larger than $\Delta_\sharp$.
Furthermore, finite radius solutions seem to exist for center datum in the entire interval from $\Delta_\sharp$ up to $\Delta_\flat$, but not for $\delta_c \geq \Delta_\flat$, see Figure \ref{fig:Hadamard_flat}. When $\delta_c \geq \Delta_\flat$, the hyperbolicity condition is violated at the center and the numerical solutions in Figure \ref{fig:Hadamard_flat}(b) blows up similarly to what they do in the Saint Venant-Kirchhoff model. 


\begin{figure}
    \begin{center}
    \subfigure[$\delta_c = 0.99\Delta_\sharp$]{
        \includegraphics[width=0.45\textwidth, trim=0.2cm 0 0.3cm 0,clip]{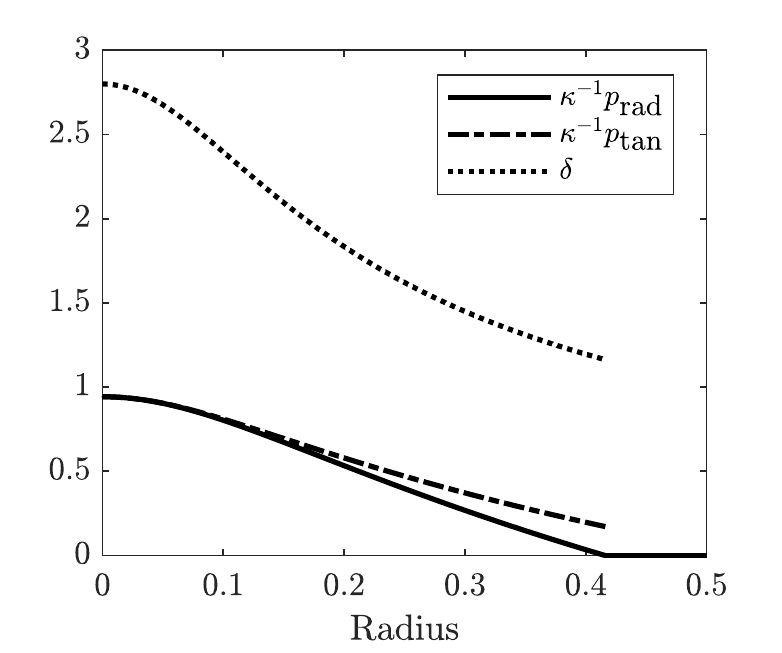}}\quad
        \subfigure[$\delta_c = 1.01\Delta_\sharp$]{
        \includegraphics[width=0.45\textwidth, trim=0.2cm 0 0.3cm 0,clip]{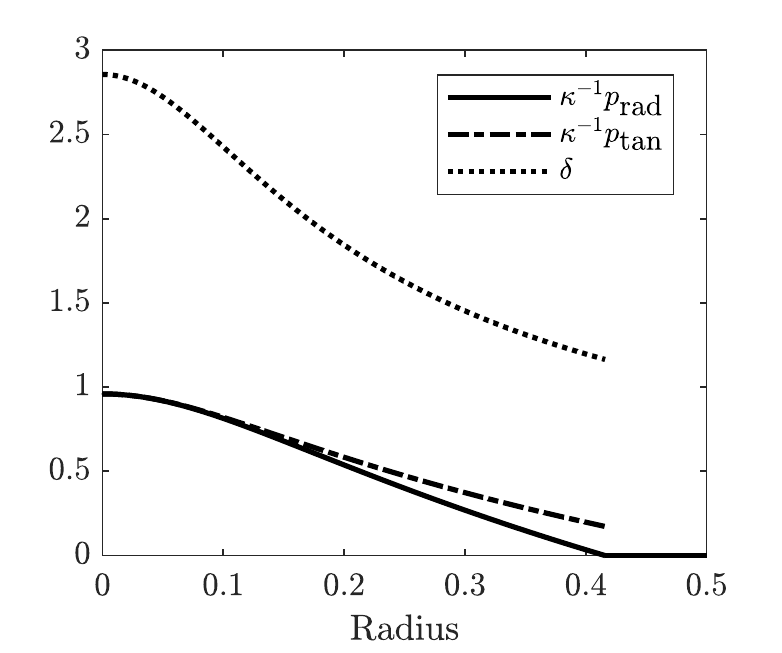}}
    \caption{Elastic balls constructed in the Hadamard material model with Poisson ratio $\nu = 0.25$ for center datum close to $\Delta_\sharp$ ($\approx 2.83$).}
    \label{fig:Hadamard_sharp}
    \end{center}
\end{figure}


\begin{figure}
    \begin{center}
    \subfigure[$\delta_c = 0.99\Delta_\flat$]{
        \includegraphics[width=0.45\textwidth, trim=0.2cm 0 0.3cm 0,clip]{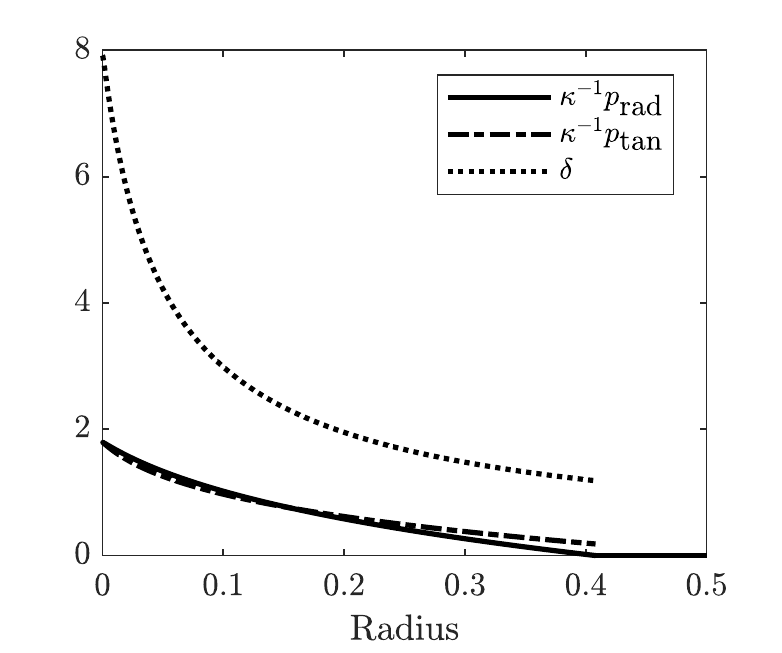}}\quad
        \subfigure[$\delta_c = 1.01\Delta_\flat$]{
        \includegraphics[width=0.45\textwidth, trim=0.2cm 0 0.3cm 0,clip]{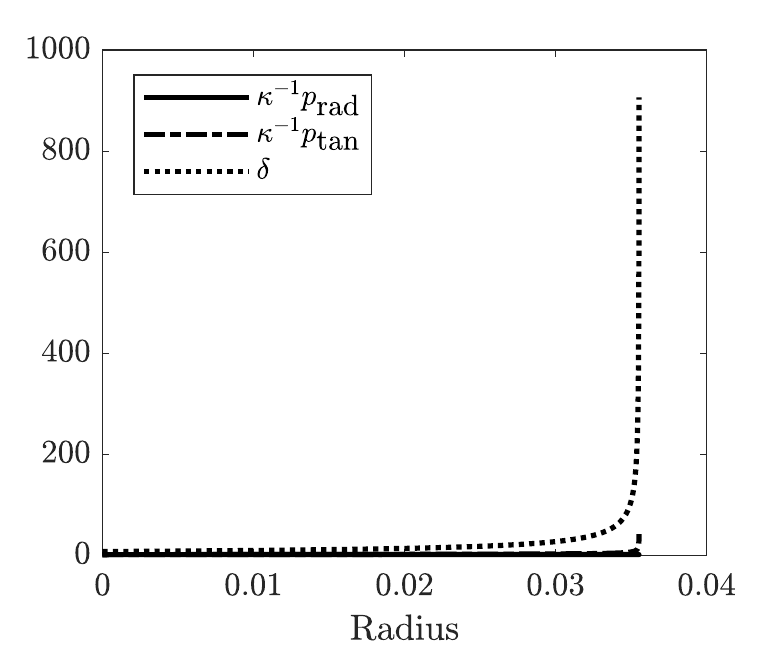}}
    \caption{Elastic balls constructed in the Hadamard material model with Poisson ratio $\nu = 0.25$ for center datum close to $\Delta_\flat$ ($\approx8$).}
    \label{fig:Hadamard_flat}
    \end{center}
\end{figure}

\subsection{Quasi-linear Signorini materials}\label{sec:signorini}
Quasi-linear Signorini materials are hyperelastic with stored energy function~\eqref{Sig}$_{\tau=0}$,
which yields the principal pressures
\begin{align*}
&\prad(\delta,\eta)\\
&\quad=\frac{3(1+4\nu)}{8(1+\nu)}\eta^{2/3}\left(2-\left(\frac{\delta}{\eta}\right)^2\right)+\frac{3}{16(1+\nu)}\eta^{4/3}\left(-4+4\left(\frac{\delta}{\eta}\right)^2+3\left(\frac{\delta}{\eta}\right)^4\right)-\frac{3(5+8\nu)}{16(1+\nu)}\\
&\ptan(\delta,\eta)=\frac{3(1+4\nu)}{8(1+\nu)}\eta^{2/3}\left(\frac{\delta}{\eta}\right)^2-\frac{3}{16(1+\nu)}\eta^{4/3}\left(-4+\left(\frac{\delta}{\eta}\right)^4\right)-\frac{3(5+8\nu)}{16(1+\nu)}.
\end{align*}
At the center
\begin{align*}
    \kappa^{-1}\prad(\delta_c,\delta_c) = \kappa^{-1}\ptan(\delta_c,\delta_c) &= \frac{3 \left(\delta _c^{2/3}-1\right) \left(3 \delta _c^{2/3}+8 \nu +5\right)}{16 (\nu +1)},\\
    \kappa^{-1}\partial_\delta \prad(\delta_c,\delta_c) &= \frac{3 \left(5 \delta _c^{2/3}-4 \nu -1\right)}{4 (\nu +1) {\delta _c}^{1/3}}.
\end{align*}
The pressures are positive at the center if and only if $\delta_c > 1$ or, when $\nu \in (-1, -5/8)$, for $0 < \delta_c < \Delta_*$, where
\begin{equation*}
    \Delta_* = \left( \frac{-5-8\nu}{3} \right)^{3/2} < 1.
\end{equation*}
Furthermore, the constant $\Delta_\flat$ is given by $\Delta_\flat = \infty$. 

{\it Remark.} The hyperbolicity condition is violated at the center if $\delta_c < ((4\nu+1)/5)^{3/2}$ but this only happens for invalid combinations of $\delta_c$ and $\nu$, i.e., when the principal pressures are negative at the center.

The results in \cite{AC19} do not cover the quasi-linear Signorini materials, so we are interested in the following questions: {\it 
Do finite radius elastic balls exist in the quasi-linear Signorini material model when (a) $\delta_c>1$, (b) $0<\delta_c<\Delta_*$?}

Regarding question (a) we were able to numerically construct finite radius balls for every combination of $\nu \in (-1,1/2)$ and $\delta_c>1$ that we tried, see Figure \ref{fig:Signorini2} for examples. Regarding question (b), we found numerical evidence suggesting the existence of a constant $\Delta_\circ$, dependent on $\nu \in (-1, -5/8)$, such that finite radius balls do exist when $\delta_c \in [\Delta_\circ,\Delta_*)$ but not when $\delta_c < \Delta_\circ$, see Figure \ref{fig:Signorini}. We have not been able to derive a closed expression for $\Delta_\circ$, but Figure \ref{fig:Signorini_Delta_circle} shows where in the $(\nu,\delta_c)$-plane finite radius balls could be constructed numerically.

\begin{figure}
    \begin{center}
    \subfigure[$\nu=-0.5$]{
        \includegraphics[width=0.45\textwidth, trim=0.2cm 0 0.3cm 0,clip]{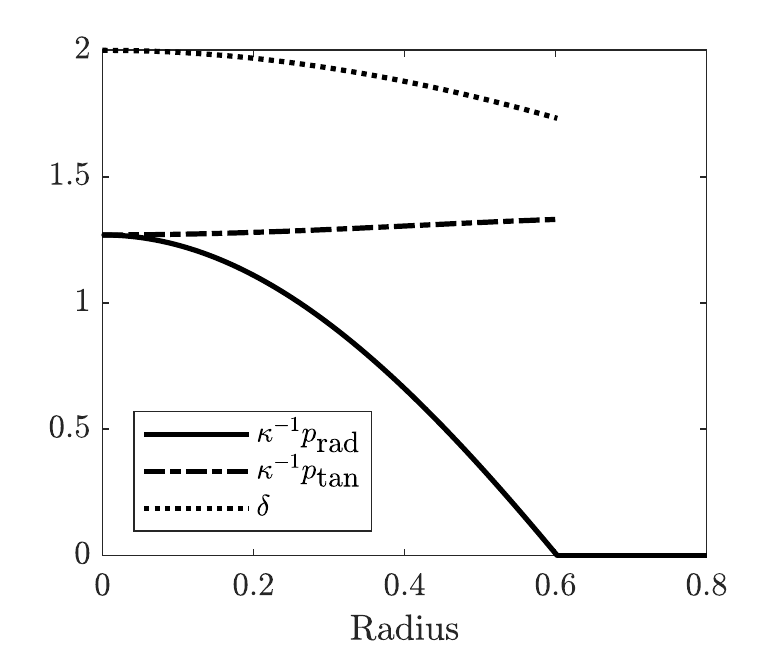}}\quad
        \subfigure[$\nu=0.25$]{
        \includegraphics[width=0.45\textwidth, trim=0.2cm 0 0.3cm 0,clip]{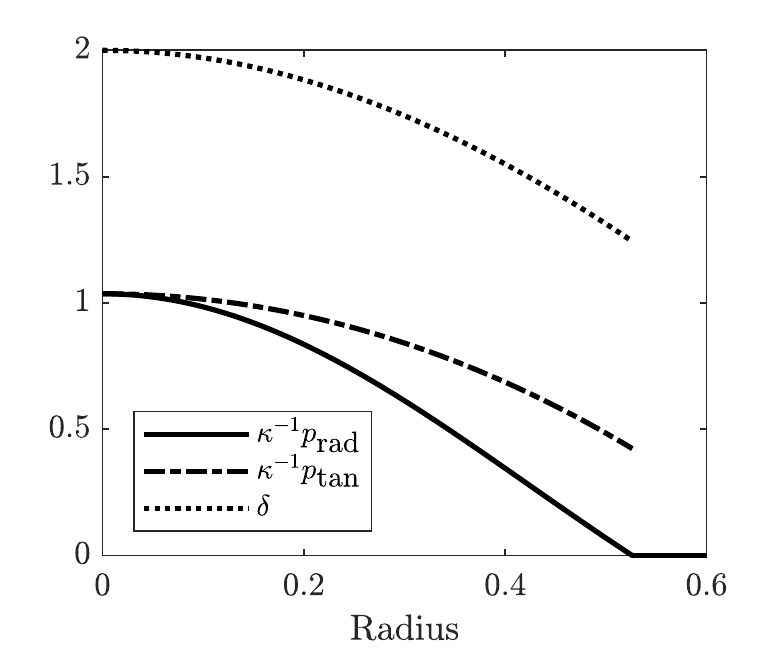}}
    \caption{Elastic balls constructed in the quasi-linear Signorini material model with center datum $\delta_c = 2$.}
    \label{fig:Signorini2}
    \end{center}
\end{figure}
%

\begin{figure}
    \begin{center}
    \subfigure[$\delta_c = 0.65\Delta_*$]{
        \includegraphics[width=0.45\textwidth, trim=0.2cm 0 0.3cm 0,clip]{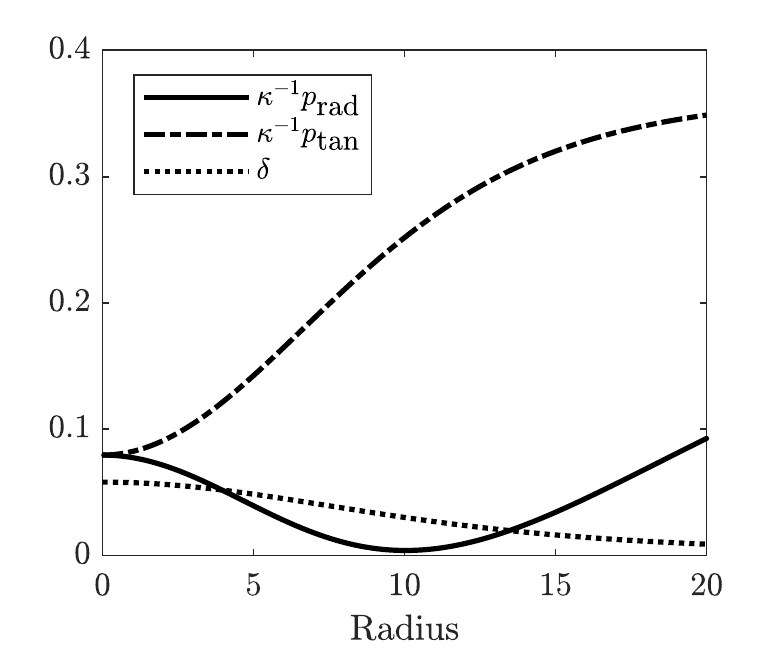}}\quad
        \subfigure[$\delta_c = 0.70\Delta_*$]{
        \includegraphics[width=0.45\textwidth, trim=0.2cm 0 0.3cm 0,clip]{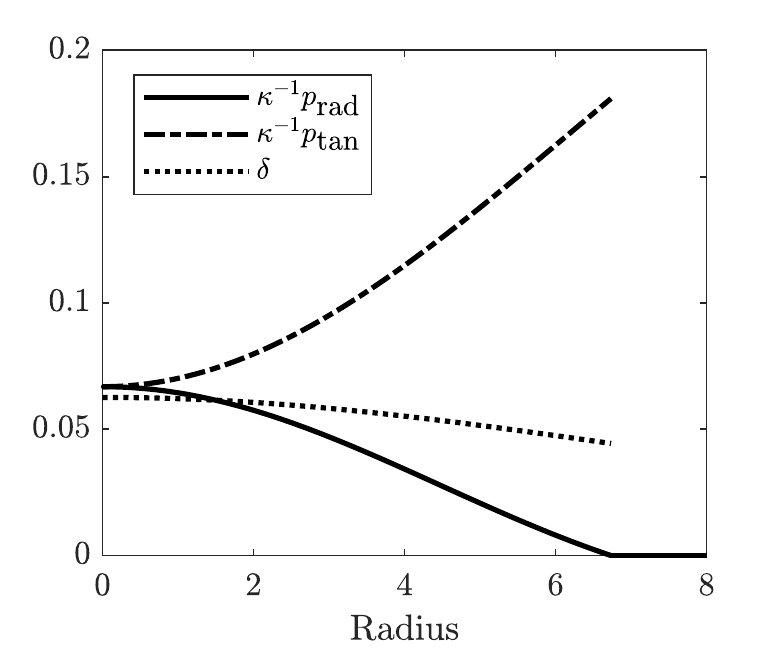}}
    \caption{Elastic balls constructed in the quasi-linear Signorini material model with Poisson ratio $\nu = -0.7$ for center datum smaller than $\Delta_*$ ($\approx 0.0894$). Only in (b) does the ball have finite radius. There seems to exist a $\Delta_\circ$ such that when $\delta_c = \Delta_\circ$ the radial pressure is tangent to the horizontal axis at one point.}
    \label{fig:Signorini}
    \end{center}
\end{figure}

\begin{figure}
\begin{center}
\includegraphics[width=0.48\textwidth,trim=0cm 0 0.5cm 0, clip ]{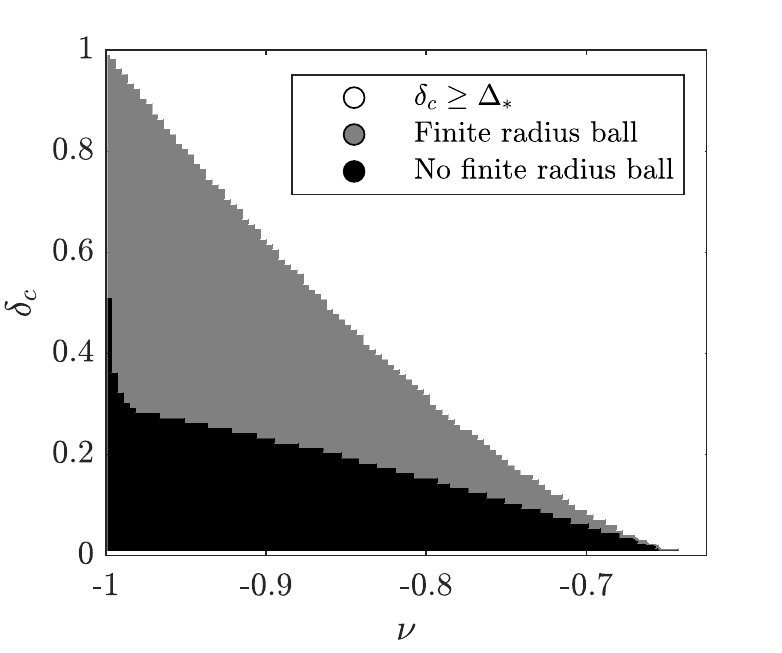}
\caption{The gray region indicates where finite radius balls have been found numerically in the quasi-linear Signorini model for $\delta_c \in (0, \Delta_*)$. The border between the black and gray regions approximates the proposed $\Delta_\circ(\nu)$}
    \label{fig:Signorini_Delta_circle}
\end{center}
\end{figure}

\newpage

\end{document}